\theoremstyle{plain}
\newtheorem{thm}{Theorem}
\newtheorem{proposition}{Proposition}
\newtheorem{assumption}{Assumption}
\begin{document}

\title{Continuous-time Modeling of Bid-Ask Spread and Price Dynamics in
Limit Order Books }
\author{Jose Blanchet and Xinyun Chen}
\maketitle

\begin{abstract}
We derive a continuous time model for the joint evolution of the mid price
and the bid-ask spread from a multiscale analysis of the whole limit order
book (LOB) dynamics. We model the LOB as a multiclass queueing system and
perform our asymptotic analysis using stylized features observed
empirically. We argue that in the asymptotic regime supported by empirical
observations the mid price and bid-ask-spread can be described using only
certain parameters of the book (not the whole book itself). Our limit
process is characterized by reflecting behavior and state-dependent jumps.
Our analysis allows to explain certain characteristics observed in practice
such as: the connection between power-law decaying tails in the volumes of
the order book and the returns, as well as statistical properties of the long-run spread distribution.
\end{abstract}

\section{Introduction}

Limit order book (LOB) models have recently attracted a lot of attention in
the literature given their importance in modern financial markets and they
are used as trading protocol in most exchanges around the world. For a brief
review of worldwide financial markets that use LOB mechanism, see the first
paragraph of \citet{Gould&etal_2012}. As we shall discuss, the literature on
price modeling based on LOB dynamics has mostly focused on one side of the
order book, or price dynamics that are not fully informed by the LOB.

One of our main contributions is the construction of a continuous time model
for the joint evolution of the mid price and the bid-ask spread (see Theorem %
\ref{Thm_Main} in Section \ref{Section_CT_Model}). Such construction is
informed by the full LOB dynamics, which we model as a multiclass queueing
system (see Section \ref{Section_Building_Block}). We endow the multiclass
queue with characteristics that are inspired by common stylized features
which are observed empirically in order book data, such as: very fast speed
of orders relative to price changes, high cancellation rates, and power-law
tails (see for example Sections \ref{Section_Observations_Distributions} and %
\ref{Section_Empirical_Validation}). Some of these stylized features allow
us to justify the use of certain asymptotic limits and weak convergence
analyses which are applied to the LOB and ultimately give rise to our
continuous time pricing model.

Another contribution that it is important to highlight is that our analysis
sheds light on the connection between power-law tails which are present both
in the distribution of orders inside the book, and also in the realized
return distributions of price processes. The connection between these
features, which are documented in the statistical literature (see %
\citet{Bouchard&etal_2002}) are explained as a result of the statements
obtained in Theorem \ref{THM stochastic averaging} and Proposition \ref%
{Result_Hazard_Rate} in Section \ref{Section_CT_Model}. Basically the
power-law tails arising from the distribution of returns in the price
processes can be explained as a consequence of the power-law tails in the
distribution of orders inside the book, the effect of cancellation policies,
and the asymptotic regime under which LOBs operate.

We establish a one-to-one correspondence between the distribution of orders
inside the book and the price-return distribution assuming a specific form
of the cancellation policy (see Proposition \ref{Result_Hazard_Rate} for the
one-to-one correspondence and Assumption \ref{ass:cancelation} for the form
of the cancellation policy). We argue that such cancellation policy has
qualitative features observed in practice. For example, we postulate higher
cancellation rates for orders that are placed closer to the spread and lower
cancellation rates for orders placed away from the spread (see the
discussion at the end of Section \ref{Section_Conect_LOB_Price}). Further,
we also argue that the cancellation rate that we postulate is such that, in
statistical equilibrium, the probability at which a given order is executed
before cancellation is roughly the same regardless of where the order is
placed in book.

Although we believe that our cancellation policy is reasonable in some
circumstances, more generally, our results are certainly useful to obtain
insights into the form of cancellation used by market participants. This
insight could be obtained by comparing the distribution of orders inside the
book and the price return distribution relative to equation (\ref%
{eq:formula_theta}), which also contains the cancellation rates.

Furthermore, we also argue in Section \ref{Sec_Con_Others} that a strong
connection between distribution of orders inside the book and price-return
distribution is to be expected in a different asymptotic regime, namely,
that in which market and limit orders arrive at comparable speed, and
constant cancellation rates per order across the book. We thus believe that
our analysis provides significant evidence for such connection between these
distributions.

We envisage our model to be useful in intra-day trading. Underlying the
motivation behind the construction of our model is our belief that there is
a significant amount of information in the order book which can be used to
help describe the evolution of the price and bid-ask spread in the order of
a few hours. At the same time, we recognize that it might be challenging in
practice to keep track of the full LOB to describe price dynamics.
Fortunately, as we demonstrate here, under the asymptotic regime that we
utilize, implied by empirical observations, it is possible to keep track of
the prices in continuous time using only a two dimensional Markov process.
Our continuous-time pricing model can be calibrated, for example, from the
historical data of the distribution of limit orders inside the book, and
then fine tuned (through an additional parameter which we call the patience
ratio, $c_{p}$) again based on historical price return distribution data
(see for example the discussion in Section \ref{Section_Empirical_Validation}%
\ involving Empirical Observation 3). Therefore, we are able to use the
information on the order book in a meaningful, yet relatively simple, way to
inform the future evolution of prices. Moreover, as we shall illustrate,
using simulated data, our final model captures empirical features observed
in practice (see Section \ref{Section_Empirical_Validation}).

Let us discuss briefly the elements that distinguish our work from previous
contributions. As we indicated earlier, we built our model from a multiclass
queueing system. Queueing theory provides a natural environment for the
study of LOB dynamics at a microstructure level. Consequently, it is no
surprise that there is a fast growing literature that leverages off the use
of queueing theory in order to analyze LOBs and the corresponding price
dynamics. \citet{Cont&Stoikov&Talreja_2010} introduces birth-death queueing
model similar to our prelimit model discussed in Section \ref%
{Section_Building_Block}. They demonstrate that this model can lead to
computable conditional probabilities of interest in terms of Laplace
transforms studied in queueing theory. A scaling is introduced in a
subsequent paper \citet{Cont&Larrard_2010} in which a continuous-time
process is derived for price dynamics, but there the authors assume that
price dynamics only depend on best bid and ask quotes. In contrast, we
derive a two dimensional Markovian process for the best bid and best ask
quotes from the whole order book model. As a result, we arrive at a limiting
process which is different from that of \citet{Cont&Larrard_2010}. In %
\citet{Maglaras&etal_2012}, queueing models are used to address the problem
of routing of orders in a fragmented market with different books. More
recently, \citet{Lakner&etal_2013}. discuss a one sided order book and track
the whole process using measure-valued characteristics. Although they
analyze the system in a high-frequency trading environment their scaling
does not appear to highlight the role of the cancellations relative to what
occurs in the markets, in which a large proportion of the orders are
actually cancelled. In contrast we not only consider the two sides of the
book, but we believe our scalings better preserve the empirical features
observed in practice.

As mentioned earlier, we use multiscale analysis, which allows us to replace
most of the stochasticity in the book by steady-state dynamics. The paper by %
\citet{Sowers&etal_2012} also takes advantage of multiscale analysis, but
their model is not purely derived from the microstructure characteristics at
the level of arrivals of limit and market orders and therefore the ultimate
model is different from the one we obtain. A recent paper by %
\citet{Horst&Paulsen_2013} provides a law of large numbers description of
the order book and the price process, which is in the end deterministic and
therefore also different in nature to the stochastic model we derive here.
Nevertheless, we feel that the spirit of \citet{Horst&Paulsen_2013} is close
to the work we do here.

A related literature on model building for order book dynamics relates to
the use of self-exciting point processes. The approach is somewhat related
to the queueing perspective, although the models are more aggregated than
the ones we consider in this paper; for example, in \citet{Zheng&etal_2013}
a model is discussed that considers only best bid and ask quotes but with a
self-exciting mechanism with constraints (see also \citet{Cartea&etal_2011}, %
\citet{Toke_2010} and the references therein) for more information on
self-exciting processes in high-frequency trading.

This manuscript is organized as follows. In Section \ref%
{Section_Building_Block} we discuss the pre-limit model underlying the LOB
dynamics. In Section \ref{Section_Observations_Distributions} we discuss
some empirical observations that inform the construction of certain
approximations in our model which in particular allow to connect the price
increments and the distribution of orders in the LOB. In Section \ref%
{Section_CT_Model} we present our asymptotic scalings and our continuous
time price model. Finally, in Section \ref{Section_Empirical_Validation} we
discuss how, using simulated data, our final model captures empirical
features observed in practice.

\section{Basic Building Blocks \label{Section_Building_Block}}

Ultimately, our goal is to construct a continuous time model for the joint
evolution of the mid price and the bid-ask spread, which is informed by the
whole order book dynamics in such a way that key stylized features are
captured. In the end our model will be obtained as an asymptotic limit which
is informed by stylized features observed empirically. We first discuss the
building blocks of our model in the prelimit.

The building blocks of our model are consistent with prevalent limit order
book models that describe the interactions between order flows, market
liquidity and price dynamics, such as in \citet{Bouchard&etal_2002} %
\citet{Cont&Stoikov&Talreja_2010}. In most existing models, the arrival rate
of limit orders corresponding, say, to a given price is given as a function
of the distance between such given price and the best price of all present
limit orders of the same type (buy or sell).

The best price of all limit sell (and buy) orders is called the ask (and
bid) price and is usually denoted by $a(t)$ (and $b(t)$) at time $t$.
However, as observed in recent empirical data, due to the growing popularity
of algorithmic trading, limit orders are put and canceled without being
executed at high frequency, especially at positions between the best ask or
bid prices (fleeting orders). Therefore, the continuous observation of the
best bid-ask prices may result in a process with too much \textquotedblleft
noise\textquotedblright\ due to variability caused by cancellations of such
fleeting orders.

Instead, we shall construct our continuous time model by looking at the
prices only at time at which an actual trade occurs; we call these
quantities \textit{prices} \emph{per trade}. We believe that this is the
natural time scale at which track the evolution of the LOB in order to
derive a continuous time price process.

The relation between the \textit{price-per-trade process} $(\bar{a}(\cdot ),%
\bar{b}(\cdot ))$ and $(a(\cdot ),b(\cdot ))$ is as follows. Suppose $%
\{t_{k}:k\geq 1\}$ are the arrival times of market orders (on both sides),
then $(\bar{a}(t),\bar{b}(t))=(a(t_{k}),b(t_{k}))$ if $t_{k}\leq t<t_{k+1}$.
As indicated, intuitively, we can think of $(\bar{a}(\cdot ),\bar{b}(\cdot
)) $ as a mechanism to filter out the noise made by fleeting orders in the
prelimit process $(a(\cdot ),b(\cdot ))$. The model that we consider in the
prelimit is described as follows.

\bigskip

\textbf{Model Dynamics and Notation:}

\begin{enumerate}
\item Limit orders or market orders arrive one at a time (i.e. there are no
batch arrivals).

\item Arrivals of limit buy orders and of sell orders are modelled as two
independent Poisson processes with equal rate $\lambda $.

\item Arrivals of market buy orders and of sell orders are modelled as two
independent Poisson processes with equal rate $\mu $.

\item Let $\{t_{k}:k\in \mathbb{Z^{+}}\}$ be the arrival times of the market
orders (either buy or sell, so these are the arrivals of a Poisson process
with rate $2\mu $).

\item The prices take values on the lattice $\{i\delta :i\in \mathbb{Z^{+}}%
\} $ and we observe their change at time lattice points $\{t_{k}:k\in
\mathbb{Z^{+}}\}$. The parameter $\delta $ is called the tick size and in
the sequel we will specify an asymptotic relationship between $\delta $ and
the frequency of arrival times of orders.

\item At time $t$, the ask price $a(t)$ (the bid price $b(t)$) equals the
minimum (maximum) of prices of all limit sell (buy) orders on the order book
at time $t$.

\item For $t_{k}\leq t<t_{k+1}$, an order placed at a relative ask price
equal to $i\delta $ at time $t$ implies that the order is posted for ask
price equal to $a\left( t_{k}\right) +i\delta $. Similarly, a relative bid
price equal to $i\delta $ at time $t$ implies an absolute bid price equal to
$b\left( t_{k}\right) -i\delta $. 

\item For $t_{k}\leq t<t_{k+1}$, upon its arrival at time $t$, a limit buy
(and sell) order sits at a relative price equal to $i\delta $ with
probability $p(i\delta ;\bar{a}(t_{k}),\bar{b}(t_{k}))$. In particular, $%
p(i\delta ;a,b)=0$ for all $i\delta \geq -(a-b)/2$ so that the incoming
limit buy and sell orders do not overlap with each other.

\item An order that right after time $t_{k}$ sits at a relative (buy or
sell)\ price equal to $i\delta $ is cancelled at rate $\alpha (i\delta ;\bar{%
a}(t_{k}),\bar{b}(t_{k}))$ during the time interval $[t_{k},t_{k+1})$.

\item A market order immediately transacts with any of the best matching
limit orders in the order book upon its arrival.
\end{enumerate}

\bigskip

\textbf{Remark:} We can actually weaken the assumption described in item 1.
above to allow market orders arriving in batches as long as the size of
incoming market order is less than the volume of standing limit orders at
the best quote.\bigskip

In our model, the ask and bid sides are two separate multi-class
single-server queues with exponentially distributed times between transition
of events (i.e. Markovian queues). On each side, the limit orders can be
view as customers that are divided into different classes according to their
relative prices (i.e. number of ticks to the best price). The class with
lower relative price has higher priority. The market orders play the role of
the server, as each of them causes a departure of a limit order from the
best tick price. In other words, limit orders are served at the same rate as
the arrival rate of market orders and the market orders pick customers from
the non-empty class with the highest priority.

It is important to note that between the arrivals of two consequent market
orders, the dynamic of the limit order book is equivalent to a set of
independent infinite-server systems. One such infinite-server system for
each class in each of the sides (buy and sell) of the order book. The
\textquotedblleft service rate\textquotedblright\ of each such infinite
server system is equal to the cancellation rate of the corresponding class.

We now proceed to develop the main ingredients of our model using stylized
features that are prevalent in market data.

\section{Empirical Observations, Price, and LOB's Distributions\label%
{Section_Observations_Distributions}}

We now discuss several empirical features that motivate the asymptotic
regime that we consider. In particular, these observations will help us
inform the asymptotic distribution of the price increments in intermediate
time scales (order of several seconds).

\subsection{Empirical Observations and Distribution of Price Increments}

\textbf{Empirical Observation 1: Multi-Scale Evolution of Limit Order Flows
and the Occurrence of Trades .} Table 1 is a sample from the descriptive
statistics of TAQ data from \citet{Cont&Kukanov&Stoikov_2013}. In
particular, we would like to hight the contrast between the daily number of
updates, which include the submission, cancellation and transactions of
limit orders at the best quote, and the daily number of trades
(transactions). Since each market order causes a transaction of limit
orders, the fact that the daily number of updates of limit orders at the
best quote is much bigger than that of trades indicate that the evolution of
limit orders is much more frequent than the arrivals of market orders in the
limit order book. Moreover, such difference is prevalent in both
high-liquidity stocks (such as Bank of America) and low-liquidity stocks
(such as CME Group). \emph{We will adopt the relative fast speed of the
limit order flows relative to market order flow (or occurrences of trade) in
our asymptotic regime.}

\begin{table}[h]
\caption{Daily average of 50 randomly chosen stocks in NYSE over 21 trading
days in April 2010.}\center
\begin{tabular}{|c|c|c|}
\hline
& Daily Number of best quote updates & Daily number of trades \\ \hline
Bank of America & $1529395$ & $15008$ \\ \hline
CME Group & $38504$ & $1412$ \\ \hline
Grand mean & $223132$ & $4552$ \\ \hline
\end{tabular}%
\end{table}
\bigskip

\textbf{Empirical Observation 2: Fleeting Orders and High Cancellation Rate.
}

Due to the prevalence of algorithmic trading in these days, the cancellation
rate of limit orders has increased dramatically over recent years. Among
recent studies on the Nasdaq INET data, \citet{Hasbrouck&Saar_2009} compares
the data from year 1990, 1999 and 2004 and find that the cancellation rate
has increased dramatically, while \citet{Hautsch&Huang_2011} finds that
about $95\%$ of the limit orders are canceled in the 2010 data as shown in
Table 2.
\begin{table}[h]
\caption{Percentage of limit orders that are canceled without (partial)
execution for 10 stocks on NASDAQ. Samples are collected in October 2010,
covering 21 trading days. }\center
\vspace{.3 cm}
\begin{tabular}{cccccccccc}
\hline
GOOG & ADBE & VRTX & WFMI & WCRX & DISH & UTHR & LKQX & PTEN & STRA \\ \hline
97.52 & 92.57 & 93.82 & 95.25 & 92.83 & 94.56 & 95.54 & 96.62 & 91.62 & 95.57
\\ \hline
\end{tabular}%
\end{table}
As suggested by this empirical work, the high cancellation rate can be
attributed to the large proportion of \textquotedblleft
fleeting\textquotedblright\ limit orders which are usually put inside the
spread and then canceled immediately if not executed. For example, %
\citet{Hautsch&Huang_2011} reports that in 2010 Nasdaq data the mean
inter-arrival time of market orders is about 7 seconds while the mean
cancellation rate of limit orders inside the spread is less than 0.2
seconds. However, limit orders deep outside the spread are more patient.
\textit{We will assume that the arrival rates and cancellation rates are
substantially higher than the arrival rates of market orders in our model.}

\bigskip

Recall that $\{t_{k}:k\geq 1\}$ is the sequence of arrival times of market
orders (on both sides). We now discuss the distribution of the ask
price-per-trade increment $(\bar{a}(t_{k+1})-\bar{a}(t_{k}))$. Similar
results on the distribution of the bid price-per-trade increment $(\bar{b}%
(t_{k+1})-\bar{b}(t_{k}))$ follow by symmetry.

Note that the following identity of events holds
\begin{equation*}
\{\bar{a}(t_{k+1})-\bar{a}(t_{k})\geq i\delta \}=\{\text{The queues at
relative prices lower than }i\delta \text{ are all empty at time }t_{k+1}\}.
\end{equation*}%
Let
\begin{equation*}
\theta (i\delta ;a,b)=P_{\pi }(\text{the queues at relative tick prices
lower then }i\delta \text{ are all empty}),
\end{equation*}%
where $\pi $ is the stationary distribution of the underlying infinite
server queues associated to each class (which are all independent). The $i$%
-th queue has arrival rate $\lambda p(i\delta ;\bar{a}(t_{k}),\bar{b}(t_{k}))
$ and cancellation rate $\alpha (i\delta ;\bar{a}(t_{k}),\bar{b}(t_{k}))$.
It is known that the stationary distribution of an infinite server queue
with arrival rate $\lambda p(i\delta ;\bar{a}(t_{k}),\bar{b}(t_{k}))$ and
service rate $\alpha (i\delta ;\bar{a}(t_{k}),\bar{b}(t_{k}))$ is Poisson
with parameter $\lambda p(i\delta ;\bar{a}(t_{k}),\bar{b}(t_{k}))/\alpha (i;%
\bar{a}(t_{k}),\bar{b}(t_{k}))$, therefore we have that
\begin{equation}
\theta (i\delta ;\bar{a}(t_{k}),\bar{b}(t_{k}))=\exp \left( ~~-\sum_{j\delta
\geq -(\bar{a}(t_{k})-\bar{b}(t_{k}))/2}^{i-1}\frac{\lambda p(j\delta ;\bar{a%
}(t_{k}),\bar{b}(t_{k}))}{\alpha (j\delta ;\bar{a}(t_{k}),\bar{b}(t_{k}))}%
\right) .  \label{eq:formula_theta}
\end{equation}

As $\lambda>>\mu $ is large, suggested by Empirical Observation 1, and $%
\alpha (\cdot )>>\mu $ as suggested by the Empirical Observation 2, we can
typically approximate the distribution of the queue lengths at time $t_{k+1}$
(given the state of the system at time $t_{k}$) by the associated
steady-state distribution of the queues. More precisely, we expect the
approximation $P(\bar{a}(t_{k+1})-\bar{a}(t_{k})\geq i\delta |\bar{a}(t_{k}),%
\bar{b}\left( t_{k}\right) )\approx \theta (i\delta ;\bar{a}(t_{k}),\bar{b}%
\left( t_{k}\right) )$ to hold. This heuristic is made rigorous in the
following theorem, the proof, which is given in Section \ref%
{App_Sect_Proof_THM stochastic averaging}, is based on the so-called
stochastic averaging principle, see \citet{Kurtz_1992}. We use $D[0,\infty )$
to denote the space of right continuous with left limits functions from $%
[0,\infty )$ to $\mathbb{R}$ endowed with the Skorokhod $J_{1}$ topology
(see \citet{Billingsley_1999} for reference).

\begin{thm}
\label{THM stochastic averaging}Consider a sequence of LOB systems indexed
by $n$. In the $n$-th system, the total number of orders in the order book
is given by $q_{n}=q<\infty $, the distribution of these orders in the order
book is assumed to satisfy $\bar{a}_{n}\left( 0\right) =a_{n}\left( 0\right)
=\bar{a}$ and $\bar{b}_{n}(0)=b_{n}\left( 0\right) =\bar{b}$. We assume that
the arrival rate of market orders satisfies $\mu _{n}=\mu $ and the
distribution of incoming limit orders is $p_{n}(\cdot ;\cdot ,\cdot
)=p(\cdot ;\cdot ,\cdot )$ (i.e. constant along the sequence of systems).
Suppose there exists a sequence of positive number $\{\xi _{n}:n\geq 1\}$
such that $\lambda _{n}=\xi _{n}\lambda $, $\alpha _{n}(\cdot )=\xi
_{n}\alpha (\cdot )$ and $\xi _{n}\rightarrow \infty $ as $n\rightarrow
\infty $. We also assume the regularity condition that for any $(a,b)\in
\mathbb{Z}^{2}$
\begin{equation}
\lim_{i\rightarrow \infty }\theta (i\delta ;a,b)=0.  \label{Eq:Reg_Cond_Thm1}
\end{equation}%
Then the corresponding price process $(\bar{a}_{n}(\cdot ),\bar{b}_{n}(\cdot
))$ converges weakly in $D[0,\infty )$ to a pure jump process $(\hat{a}%
(\cdot ),\hat{b}(\cdot ))$. The process $(\hat{a}(\cdot ),\hat{b}(\cdot ))$
jumps at time corresponding to the arrivals of a Poisson process with rate $%
2\mu $. The size of the jumps are not independent, in particular, if $t$ is
a jump time, then the jump size at $t$ is a random vector following the the
distribution
\begin{eqnarray*}
&&P\left( \hat{a}(t)-\hat{a}(t-)=i\delta ,\hat{b}(t)-\hat{b}(t-)=j\delta ~|%
\hat{a}(t-),\hat{b}(t-)\right) \\
&=&[\theta (i\delta ;\hat{a}(t-),\hat{b}(t-))-\theta ((i+1)\delta ;\hat{a}%
(t-),\hat{b}(t-))]\times \lbrack \theta (j\delta ;\hat{a}(t-),\hat{b}%
(t-))-\theta ((j+1)\delta ;\hat{a}(t-),\hat{b}(t-))].
\end{eqnarray*}
\end{thm}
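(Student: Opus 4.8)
The plan is to recognize the prelimit price process $(\bar a_n(\cdot),\bar b_n(\cdot))$ as a function of a two-timescale Markov process and to invoke the stochastic averaging principle of \citet{Kurtz_1992}. The slow variable is the pair of reference prices $(\bar a_n(t_k),\bar b_n(t_k))$, which changes only at the market-order epochs $\{t_k\}$ (a Poisson process of rate $2\mu$, independent of $n$); the fast variable is the vector of queue lengths $Q_n(t)=(Q_n^{(i)}(t))_i$ measured relative to the current reference prices. Between consecutive $t_k$'s, $Q_n$ evolves as a family of independent $M/M/\infty$ queues with arrival rates $\xi_n\lambda p(i\delta;\bar a,\bar b)$ and service (cancellation) rates $\xi_n\alpha(i\delta;\bar a,\bar b)$; since $\xi_n\to\infty$ while the inter-trade times stay order $1$, the fast process relaxes to its stationary law $\pi$ on any positive time interval. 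The key algebraic input — that $\pi$ is a product of Poissons with the stated parameters, so that the emptiness probabilities are exactly the $\theta(i\delta;a,b)$ of \eqref{eq:formula_theta} — is already recorded in the excerpt.

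First I would set up the state descriptor $X_n(t)=((\bar a_n(t),\bar b_n(t)),Q_n(t))$ and write its generator as $\mathcal A_n = \xi_n \mathcal A^{\mathrm{fast}}_{(\bar a,\bar b)} + \mathcal A^{\mathrm{slow}}$, where $\mathcal A^{\mathrm{fast}}$ is the generator of the decoupled $M/M/\infty$ system (with unit-rate parameters $\lambda,\alpha$) acting on the queue coordinates with the reference prices frozen, and $\mathcal A^{\mathrm{slow}}$ encodes the rate-$2\mu$ transitions that (i) resolve a trade — the highest-priority nonempty class loses a customer — and (ii) simultaneously shift the reference price by the number of empty classes below the served one and relabel the remaining queues relative to the new best price. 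Next I would verify the hypotheses of Kurtz's averaging theorem: tightness of $(\bar a_n,\bar b_n)$ in $D[0,\infty)$, which is immediate because this component has at most one jump per $t_k$ and the jump sizes are tight (here the regularity condition \eqref{Eq:Reg_Cond_Thm1}, $\theta(i\delta;a,b)\to 0$, is exactly what guarantees the averaged jump-size law is a genuine probability distribution — no escape of mass to $+\infty$); and the ergodicity of the frozen fast process, which holds since each $M/M/\infty$ queue is positive recurrent with the explicit Poisson stationary law, uniformly over the (discrete) reference-price values visited.

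The averaging step then identifies the limit: along any subsequence, $(\bar a_n,\bar b_n)\Rightarrow(\hat a,\hat b)$, and the limiting generator is obtained by replacing the fast coordinates in $\mathcal A^{\mathrm{slow}}$ by an average against $\pi_{(\hat a(t-),\hat b(t-))}$. Concretely, at a trade epoch the served side jumps up by $i\delta$ iff classes $0,\dots,i-1$ are empty but class $i$ is not, which under the stationary product-Poisson law has probability $\theta(i\delta;\hat a(t-),\hat b(t-))-\theta((i+1)\delta;\hat a(t-),\hat b(t-))$; the buy and sell queue families are independent, so the joint jump law factorizes, giving exactly the displayed formula. Since the limiting generator is that of a well-posed pure-jump Markov process (bounded jump rate $2\mu$, state-dependent but everywhere-defined jump kernel on the lattice), the martingale problem is well-posed, the limit does not depend on the subsequence, and full weak convergence follows.

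The main obstacle I anticipate is the bookkeeping in $\mathcal A^{\mathrm{slow}}$: a trade does not merely average the queue lengths, it \emph{reindexes} the entire queue vector relative to a new best price, so the slow and fast dynamics are coupled through the state space itself rather than only through rates. Making the averaging rigorous therefore requires showing that, at a trade epoch $t_k$, the post-jump queue configuration is — with vanishing error as $\xi_n\to\infty$ — a fresh draw from the stationary law associated to the \emph{new} reference prices, so that the Markov property propagates cleanly to the limit; this is where one must control the relaxation of the fast process over the (exponentially distributed, order-$1$) inter-trade gap and rule out boundary effects from classes that become newly ``inside the spread'' after a price move, using $p(i\delta;a,b)=0$ for $i\delta\ge-(a-b)/2$ together with \eqref{Eq:Reg_Cond_Thm1}. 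The remaining verifications — tightness, continuity of the averaged coefficients in the (discrete) slow variable, uniform integrability of jump sizes — are routine given the explicit Poisson structure.
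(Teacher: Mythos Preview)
Your proposal is correct and follows the same route as the paper: cast the system as a two-timescale Markov process, invoke Kurtz's (1992) averaging theorem (specifically Theorem~2.1 and Example~2.4 there) to obtain the averaged martingale problem for the slow variable, compute the averaged jump kernel from the product-Poisson stationary law of the frozen fast process, and conclude via well-posedness of the limiting pure-jump martingale problem (Ethier--Kurtz, Chapter~4.4).

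The one point worth flagging concerns your anticipated ``reindexing obstacle'': this is an artifact of your choice to index the fast queue vector by \emph{relative} price. The paper instead takes $Y_n^i(t)$ to be the (signed) queue length at \emph{absolute} price tick $i\in\mathbb{Z}$, so that a market-order arrival simply updates the slow variable to $(a(Y_n),b(Y_n))=(\min\{i:Y_n^i>0\},\max\{i:Y_n^i<0\})$ with no relabeling of $Y_n$ whatsoever; the dependence of the fast rates on the slow variable enters only through the offsets $\bar a_n+i$, $\bar b_n-i$ in the arguments of $p$ and $\alpha$, which is exactly the standard frozen-coefficient setup of Kurtz's theorem. The removal of the single transacted limit order is then a rate-$2\mu$ perturbation of $Y_n$, dominated by the $\xi_n$-scaled fast generator and vanishing in the limit. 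With this coordinate choice the fast/slow decomposition is clean and the extra relaxation argument you sketch in your final paragraph is unnecessary.
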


\textbf{Remark: }The regularity condition (\ref{Eq:Reg_Cond_Thm1}) not only
is quite natural, but it can be easily verified in terms of $p\left( i\delta
;a,b\right) $ and $\alpha \left( i\delta ;a,b\right) $ because there is an
explicit formula for $\theta \left( i\delta ;a,b\right) $ given below in
equation (\ref{eq:formula_theta}).

\bigskip

It is important to note that in the previous result we have held the arrival
rates of market orders constant, so this result simply describes the price
processes at times scales corresponding to the inter-arrival times of market
orders (i.e. in the order of a few seconds according to the representative
date discussed earlier). In Section \ref{Section_CT_Model} we shall
introduce a scaling that will allow us to consider the process in longer
time scales (several minutes or longer) by increasing the arrival rate of
market orders.

Theorem \ref{THM stochastic averaging} can be extended without much
complications to include more complex dynamics in the arrivals of the market
orders. For instance, one way to extend our model is to allow traders to
post market orders depending on the current bid-ask price. This modification
can be introduced as thinning procedure to the original Poisson process with
rate $2\mu $, the thinning parameter might depend on the observed bid-ask
price $(a(\cdot ),b(\cdot ))$. Other examples of the interactions between
market participants that can be included in our model extensions are
correlation between the buying and selling sides and dependence between
arrival rate of market orders and the spread width. \bigskip

\subsection{Connecting Distribution of Price Increments and LOB's
Distributions\label{Section_Conect_LOB_Price}}

We now will argue how Theorem \ref{THM stochastic averaging} allows us to
provide a direct connection between the increment distribution of the
price-per-trade process and the distribution of orders in the LOB. We
believe that this connection, although simple, is quite remarkable because
it forms the basis behind our idea of using information in the LOB to
predict the evolution of prices.

Let us assume the following form for the cancellation rate.

\begin{assumption}
\label{ass:cancelation}
\begin{equation}
\alpha (i\delta ;\bar{a}(t_{k}),\bar{b}(t_{k}))=\frac{\lambda}{c_p}\cdot%
\frac{p(i\delta;\bar{a}(t_k),\bar{b}(t_k))}{log(1-\sum_{j\leq i-1}p(j\delta;%
\bar{a}(t_k),\bar{b}(t_k)))-log(1-\sum_{j\leq i}p(j\delta;\bar{a}(t_k),\bar{b%
}(t_k)))},  \label{cancelation rate}
\end{equation}%
where $c_{p}>0$ is a constant we call the patience ratio of limit orders and
we will see in a moment that it plays an important role in the connection
between the distributions of limit order flow and price returns.
\end{assumption}

\bigskip

Under this assumption, by simple algebra in (\ref{eq:formula_theta}), we
have the following result.

\begin{proposition}
\label{Result_Hazard_Rate}
\begin{equation}  \label{eq:expression_theta}
\theta(i\delta;\bar{a}(t_k),\bar{b}(t_k))=(1-\sum_{j\leq i-1}p(j\delta;\bar{a%
}(t_k),\bar{b}(t_k)))^{c_p},
\end{equation}
\end{proposition}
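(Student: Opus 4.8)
The plan is to substitute the assumed cancellation rate (\ref{cancelation rate}) into the explicit formula (\ref{eq:formula_theta}) for $\theta$ and observe that the sum in the exponent telescopes. Concretely, write $S_i := \sum_{j\le i} p(j\delta;\bar a(t_k),\bar b(t_k))$ for the cumulative mass of the incoming-order distribution up to relative tick $i$, so that the ratio appearing in the denominator of $\alpha$ is exactly $\log(1-S_{i-1}) - \log(1-S_i)$. Then for each $j$ in the range of summation,
\begin{equation*}
\frac{\lambda\, p(j\delta;\bar a(t_k),\bar b(t_k))}{\alpha(j\delta;\bar a(t_k),\bar b(t_k))}
= c_p\left[\log(1-S_{j-1}) - \log(1-S_j)\right],
\end{equation*}
by Assumption \ref{ass:cancelation}. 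Summing over $j$ from the lowest admissible relative tick up to $i-1$ collapses the telescoping difference, leaving $c_p\left[\log(1-S_{j_{\min}-1}) - \log(1-S_{i-1})\right]$, where $j_{\min}$ is the smallest relative tick with positive arrival probability. Since $p(j\delta;a,b)=0$ for $j\delta \ge -(a-b)/2$ and, more to the point, $p$ has no mass below the admissible range, the cumulative mass just below the start of the range is zero, i.e. $S_{j_{\min}-1}=0$ and $\log(1-S_{j_{\min}-1})=\log 1 = 0$. Hence the exponent in (\ref{eq:formula_theta}) equals $c_p\log(1-S_{i-1})$, and exponentiating gives
\begin{equation*}
\theta(i\delta;\bar a(t_k),\bar b(t_k)) = \exp\!\left(c_p\log(1-S_{i-1})\right) = (1-S_{i-1})^{c_p},
\end{equation*}
which is precisely (\ref{eq:expression_theta}).

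The only genuine points to be careful about, rather than true obstacles, are bookkeeping ones: first, matching the index range in the telescoping sum to the lower summation limit $j\delta \ge -(\bar a(t_k)-\bar b(t_k))/2$ in (\ref{eq:formula_theta}) and confirming that the "boundary term" $\log(1-S_{j_{\min}-1})$ indeed vanishes — this uses that $p(\cdot)$ is a genuine probability mass function supported on the admissible relative ticks, so partial sums below the support are $0$; second, checking that $0 \le S_i < 1$ strictly for all finite $i$ in the support, so that $\log(1-S_i)$ and the denominator of $\alpha$ are well defined and the cancellation rate in (\ref{cancelation rate}) is finite and positive. Both follow from $p$ being a probability distribution with tail mass at every finite tick (which is also what makes the regularity condition (\ref{Eq:Reg_Cond_Thm1}) compatible, since $(1-S_{i-1})^{c_p}\to 0$ as $i\to\infty$ iff $S_{i-1}\to 1$). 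I would state these as a one-line verification and then present the telescoping computation above as the body of the proof.
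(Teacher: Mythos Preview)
Your proposal is correct and matches the paper's approach exactly: the paper states that the result follows ``by simple algebra in (\ref{eq:formula_theta})'' under Assumption~\ref{ass:cancelation}, and your telescoping computation is precisely that algebra spelled out. The bookkeeping caveats you raise (well-definedness of $\log(1-S_i)$ and vanishing of the boundary term) are appropriate but routine, so a one-line verification as you suggest suffices.
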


Since $\theta(\cdot;a,b)$ is the tail of price return and $%
1-\sum_{j\leq\cdot}p(j;a,b)$ is the tail of the relative price of incoming
limit orders. Proposition \ref{Result_Hazard_Rate} indicates that the price
return inherits some statistical properties of the distribution of limit
order flow on the order book. In real market data, power-law tails are
reported in both the relative price of limit orders and the mid-price return
as discussed in the next subsection. In our model, given that the
distribution of the limit order flow $p(\cdot ;\bar{a},\bar{b})$ has a
power-law with exponent $v$, in other words, $\bar{F}^{c_{p}}(i\delta
)=\sum_{j\geq i}p(j\delta ;\bar{a},\bar{b})\approx (c_{1}i\delta )^{-v}$.
Then, as a direct consequence of (\ref{eq:expression_theta}), we have $P(%
\hat{a}(t)-\hat{a}(t-)\geq i\delta |\bar{a},\bar{b})\approx (c_{1}i\delta
)^{-c_{p}v}$ and therefore the price returns also follows a power law but
with different exponent $c_{p}v$.

For $c_{p}>1$, our model recovers an interesting phenomenon in real market
that the price return has a thinner tail than the relative price of limit
orders as reported in \citet{Bouchard&etal_2002}. Besides, in our model when
the limit orders are more patient ($c_{p}$ increases), the price return has
a thinner tail. This is consistent with the fact that price volatility
decreases when the market has more liquidity supply (since the limit orders
stand longer in the order book as $c_{p}$ increases).

We shall put some remarks on our Assumption \ref{ass:cancelation} on the
cancellation rate of limit orders. Although this assumption is made for
mathematical tractability and because we did not find enough data to develop
an empirical-based model, it is consistent with empirical observations to
some level. In particular, under Assumption \ref{ass:cancelation}, the
cancellation rate is decreasing with respect to the relative price $i\delta $
as is reported in \citet{Gould&etal_2012} and the references therein.
To see why $\alpha (\cdot ;a,b)$ is decreasing, let's assume that $p(i\delta
;a,b)\leq A\delta $ for some constant $A>0$. Since for any fixed $h>0$ and $%
x>0$ small enough, $x/(\log (a+x)-\log (a))\approx a$, we have
\begin{equation*}
\alpha (i\delta ;a,b)\approx \frac{\lambda }{c_{p}}(1-\sum_{j\leq
i-1}p(j\delta ;a,b))
\end{equation*}%
when $\delta $ is small enough and hence it is decreasing in $i\delta $.

Moreover, when $c_{p}=1$, $\theta (\cdot ;a,b)=1-\sum_{j\leq i-1}p(j\delta
;a,b)$. In this case, $\theta (\cdot ;a,b)$ is approximately proportional to
$\alpha (\cdot ;a,b)$, which implies that the impatience level of a standing
limit order at position $i\delta $ (namely $\alpha \left( i\delta
;a,b\right) $) is proportional to its rate of execution as observed by the
arriving market orders (namely $\mu \theta \left( i\delta ;a,b\right) $).
So, the probability that a given limit order in equilibrium at position $%
i\delta $ gets executed before cancellation is equal to $\mu \theta \left(
i\delta ;a,b\right) /(\alpha \left( i\delta ;a,b\right) +\mu \theta \left(
i\delta ;a,b\right) )\approx \mu /(\lambda +\mu )$. Consequently, in this
sense all limit orders have roughly the same probability of execution in
equilibrium.

\subsubsection{Connecting Distribution of Price Increments and LOB's
Distributions in Other Regimes\label{Sec_Con_Others}}

We close this section by briefly discussing another asymptotic regime.
Suppose that one assumes that the cancellation rate per order at relative
price $i\delta $ equals $\alpha \left( i\delta ,a,b\right) =\alpha $
(constant) -- see for example \citet{Cont&Stoikov&Talreja_2010}. Then one
can check that the stationary distribution of the multi-class queues
becomes,
\begin{equation}
\theta \left( i\delta ,a,b\right) =\left( \sum_{l=0}^{\infty }\prod_{k=1}^{l}%
\frac{\lambda F\left( i\delta ,a,b\right) }{\mu +k\alpha }\right) ^{-1}.
\label{eq:other regime}
\end{equation}%
So, if $\lambda _{n},\mu _{n}\rightarrow \infty $ in such a way that $%
\lambda _{n}/\mu _{n}\rightarrow c^{\prime }$ as $n\rightarrow \infty $ and $%
\alpha _{n}/\lambda _{n}\rightarrow 0$, then we obtain
\begin{equation*}
\theta \left( i\delta ,a,b\right) \approx \bar{F}\left( i\delta ,a,b\right)
/c^{\prime },
\end{equation*}%
and arrive at the same conclusion as in Proposition \ref{Result_Hazard_Rate}
which $c_{p}=1$. We therefore believe that the sort of relationship that we
have exposed via Proposition \ref{Result_Hazard_Rate} between the return
distribution and the distribution of orders in the book might be relatively
robust. Under the assumption that $\lambda _{n}/\mu _{n}\rightarrow
c^{\prime }$, there is no stochastic averaging principle such as discussed
in Theorem \ref{THM stochastic averaging}. However, one can obtain a
limiting price process with price increment distributed as (\ref{eq:other
regime}) by observing the LOB and the price in suitably chosen discrete time
intervals.

\section{Continuous Time Model\label{Section_CT_Model}}

We write $\bar{s}(t)=\bar{a}(t)-\bar{b}(t)$ to denote the bid-ask spread
per-trade at time $t$. We shall develop a stochastic model for the
price-spread dynamics in longer time scale (order of several minutes or
more). The model will be a jump-diffusion limit of the discrete price-spread
processes as given in Section \ref{Section_Building_Block}.

We will now introduce the distribution of relative prices in the LOB, $%
p(\cdot ;\bar{a}(t_{k}),\bar{b}(t_{k}))$. We shall impose our assumptions
directly on $\theta (\cdot ;\bar{a}(t_{k}),\bar{b}(t_{k}))$ because we can
go back and forth between $\theta (\cdot ;\bar{a}(t_{k}),\bar{b}(t_{k}))$
and $p(\cdot ;\bar{a}(t_{k}),\bar{b}(t_{k}))$ directly via (\ref%
{eq:expression_theta}). We shall consider a sequence of limit order books
indexed by $n$ and their ask-bid (per-trade) price process $\{(\bar{a}%
^{n}(\cdot ),\bar{b}^{n}(\cdot ))\}$. The dynamic of $(\bar{a}^{n}(\cdot ),%
\bar{b}^{n}(\cdot ))$ is characterized by the arrival rate of market orders
and the price increments. In turn, the price increments will be defined in
terms of auxiliary (spread-dependent) random variables denoted by $\Delta
_{a}^{n}\left( \bar{s}^{n}(t_{k})\right) $ for the ask price process and $%
\Delta _{b}^{n}\left( \bar{s}^{n}(t_{k})\right) $ for the buy price process.
For simplicity in the notation, we often write $\Delta _{a}^{n}\left(
t_{k}\right) $ instead of $\Delta _{a}^{n}\left( \bar{s}^{n}(t_{k})\right) $
(similarly for $\Delta _{b}^{n}\left( t_{k}\right) $). We will assume that
both $\Delta _{a}^{n}\left( t_{k}\right) $ and $\Delta _{b}^{n}\left(
t_{k}\right) $ have the same distribution given the spread-per trade, so we
simply provide the description for $\Delta _{a}^{n}\left( t_{k}\right) $ in
our following assumption which is motivated by the Empirical Observation 3.

\begin{assumption}
\label{ass:price return} \textbf{(Price return distribution) }First define,%
\begin{equation}
\Delta _{a}^{n}\left( \bar{s}^{n}(t_{k})\right) =(1-I_{k}^{a})\cdot
(-1)^{R_{k}^{a}}[U_{k}^{a}/(\sqrt{n}\delta )]\delta +I_{k}^{a}[\bar{s}%
^{n}(t_{k})V_{k}^{a}/(2\delta )]\delta  \label{price return}
\end{equation}%
where:

i) $I_{k}^{a}$ is Bernoulli with $P(I_{k}^{a}=1)=q$ for some $q>0$,

ii) $U_{k}^{a}$ is a random variable with support on $[0,\xi ]$ for $\xi \in
(0,\infty )$.

iii) $R_{k}^{a}$ is Bernoulli with $P(R_{k}^{a}=1)=(1+2\beta/\sqrt{n})/2$
for some $\beta >0$,

iv) $V_{k}^{a}$ is a continuous random variable so that $P(V_{k}^{a}\geq
-1)=1$.

v) the random variables $I_{k}^{a}$, $U_{k}^{a}$, $R_{k}^{a}$ and $V_{k}^{a}$
are independent of each other (independence is assumed to hold across $k$
and for the superindices $a,b$).

Then we let
\begin{equation}
\begin{cases}
\bar{a}^{n}(t_{k+1})=\bar{a}^{n}(t_{k})+\Delta _{a}^{n}\left( \bar{s}%
(t_{k})\right) \vee ([-\bar{s}^{n}(t_{k})/(2\delta )]\delta ), \\
\bar{b}^{n}(t_{k+1})=\bar{b}^{n}(t_{k})-\Delta _{b}^{n}\left( \bar{s}%
(t_{k})\right) \vee ([-\bar{s}^{n}(t_{k})/(2\delta )]\delta ),%
\end{cases}
\label{ask bid dynamics}
\end{equation}%
and this is equivalent to assuming%
\begin{equation*}
\theta (i\delta ,\bar{a}^{n}(t_{k}),\bar{b}^{n}(t_{k}))=P\left( \Delta
_{a}^{n}\left( \bar{s}(t_{k})\right) \vee ([-\bar{s}^{n}(t_{k})/(2\delta
)]\delta )\geq i\delta \right) .
\end{equation*}
\end{assumption}

\textbf{Remarks:}

1. The first term $(1-I_{k}^{a})\cdot (-1)^{R_{k}^{a}}[U_{k}^{a}/(\sqrt{n}%
\delta )]\delta $ captures limit orders tend to cluster close to their
respective best bid or ask prices; the parameter $(1-q)\in \left( 0,1\right)
$ can represents the proportion of orders that are concentrated around the
best bid or ask price. Since, as observed earlier, these correspond to a
substantial proportion of the total number of orders placed, we might choose
$q\approx 0$.

2. The second term $I_{k}^{a}[\bar{s}^{n}(t_{k})V_{k}^{a}/(2\delta )]$
captures the limit orders that are put far away from the current bid or ask
price. In Section \ref{Section_Empirical_Validation}, we shall choose $%
V_{k}^{a}$ to have a density $f_{V}$ with a power-law decaying tails which
are consistent with empirical observations (see Empirical Observation 3). We
also postulate a multiplicative dependence on $\bar{s}^{n}(t_{k})$ to
capture the positive correlation between size of spread and variability in
return distribution as reported in (\citet{Bouchard&etal_2004}).

3. Recall that the most aggressive price ticks that are allowed in our
pre-limit model assumptions are at the mid price; this results in the cap $%
\vee ([-\bar{s}^{n}(t_{k})/(2\delta )]\delta )$ appearing in (\ref{ask bid
dynamics}), which consequently yields $\bar{a}^{n},\bar{b}^{n}$.

4. The asymmetry in the distribution of $R_{k}^{a}$ allows us to introduce a
drift term in the spread, which will be useful to induce the existence of
steady-state distributions. We will validate certain features of the
steady-state distribution of our model vis-a-vis statistical evidence in
Section \ref{Section_Empirical_Validation}.

In addition, we impose the following assumptions on time and space scalings,
which are consistent with Empirical Observations 1 and 2. In order to carry
out a heavy traffic approximation, we consider a sequence of LOB systems
indexed by $n\in \mathbb{Z}^{+}$, such that in the $n$-th system:

\begin{assumption}
\label{ass:scaling} \textbf{(Time and Space Scale)}

\begin{enumerate}
\item The arrival rate of market orders on each side $\mu _{n}=n\mu $;

\item Tick size $\delta _{n}\longrightarrow 0$ so that either $\delta
_{n}=o\left( n^{-1/2}\right) $ or $\delta _{n}=n^{-1/2}$.

\item We assume that $q_{n}=\gamma /n$ for some $\gamma >0$.

\end{enumerate}
\end{assumption}

In order to explain our scaling, note that the number of jumps,
corresponding to the component involving $V_{k}^{a}$ in (\ref{price return})
is Poisson with rate $\gamma \mu $ so Condition 1 of Assumption \ref%
{ass:scaling} helps us capture jump effects in the limit. The scaling that
we consider implies the existence of two types of arriving limit orders, one
type that arrives more frequently than the other, see Assumption \ref%
{ass:scaling}, part 3. in connection to Assumption \ref{ass:price return},\
part i). This scaling feature, together with the fact that the probability
of an order being executed is roughly constant across the book\ (as
discussed at the end of Section \ref{Section_Conect_LOB_Price}) induces a
much higher cancellation rate close to the spread, which is consistent with
empirical findings.

Now we are ready to state our result on the spread and price dynamics
informed by the limit order book.

\begin{thm}
\label{Thm_Main}For the $n$-th system, let $\bar{s}^{n}(t)=\bar{a}^{n}(t)-%
\bar{b}^{n}(t)$ be the spread process and $\bar{m}^{n}(t)=\bar{a}^{n}(t)+%
\bar{b}^{n}(t)$ be twice of the mean price. Suppose $(\bar{s}^{n}(0),\bar{m}%
^{n}(0))=(s_{0},m_{0})$. Then, under Assumptions 1-4, the pair of processes $%
(\bar{s}^{n},\bar{m}^{n})\in D([0,\infty ),\mathbb{R}^{+}\times \mathbb{R})$
converges weakly to $(\bar{s},\bar{m})\in D([0,\infty ),\mathbb{R}^{+}\times
\mathbb{R})$ with $(\bar{s}(0),\bar{m}(0))=(s_{0},m_{0})$ such that
\begin{equation}
\begin{cases}
d\bar{s}(t) & =- \eta dt+dW_{a}(t)+dW_{b}\left( t\right) +\bar{s}\left(
t_{-}\right) dJ_{1}(t)/2+\bar{s}\left( t_{-}\right) dJ_{2}(t)/2+dL(t), \\
d\bar{m}(t) & =dW_{a}(t)-dW_{b}\left( t\right) +\bar{s}\left( t_{-}\right)
dZ_{1}(t)/2-\bar{s}\left( t_{-}\right) dZ_{2}(t)/2.%
\end{cases}
\label{jump diffusion limit}
\end{equation}%
Here,

\begin{enumerate}
\item $\eta=2\mu \beta E\left( \left[U_{1}^{a}\right]\right)=2\mu \beta
E\left(\left[ U_{1}^{b}\right]\right)$ if $\delta_n=n^{-1/2}$, and $%
\eta=2\mu \beta E\left( U_{1}^{a}\right)=2\mu \beta E\left( U_{1}^{b}\right)$
for $\delta_n=o(n^{-1/2})$.

\item $W_{a}$ and $W_{b}$ are two independent Brownian motions, each with
zero mean and variance rate $\sigma ^{2}=\mu E(\left[ U_{j}^{a}\right]
^{2})= $ $\mu E(\left[ U_{j}^{b}\right] ^{2})$ if $\delta _{n}=n^{-1/2}$,
and $\sigma ^{2}=\mu E((U_{j}^{a})^{2})=\mu E((U_{j}^{b})^{2})$ for $\delta
_{n}=o(n^{-1/2})$.

\item $J_{1}$ and $J_{2}$ are two i.i.d. compound Poisson processes with
constant jump intensity $\gamma \mu $ and the jump density distribution
given by the density of $V_{1}^{a}$.

\item $\bar{s}(t)\geq 0$ and $L(t)$ satisfies: $L(t)=0$, $dL(t)\geq 0$ and $%
\bar{s}(t)dL(t)=0$ for all $t\geq 0$.
\end{enumerate}
\end{thm}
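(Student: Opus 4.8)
The plan is to prove Theorem \ref{Thm_Main} by a heavy-traffic/diffusion-approximation argument: a functional law of large numbers for the market-order clock, a martingale functional central limit theorem for the accumulated ``small'' price increments, Poisson thinning for the ``large'' (spread-proportional) increments, and the continuity of the one-sided Skorokhod reflection map, assembled through a tightness-plus-uniqueness argument for the limiting constrained SDE (\ref{jump diffusion limit}). I work on a fixed horizon $[0,T]$; the extension to $[0,\infty)$ is routine. Writing $N^{n}(t)$ for the number of market orders up to $t$, I first record that by (\ref{ask bid dynamics}), for $t_{k}\le t<t_{k+1}$,
\[
\bar a^{n}(t)-\bar a^{n}(0)=\sum_{k\le N^{n}(t)}\Big(\Delta_{a}^{n}(t_{k})\vee\big([-\bar s^{n}(t_{k})/(2\delta_{n})]\delta_{n}\big)\Big),
\]
and analogously for $\bar b^{n}$, so that $\bar s^{n}=\bar a^{n}-\bar b^{n}$ and $\bar m^{n}=\bar a^{n}+\bar b^{n}$ are obtained by adding and subtracting. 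Each summand splits, through $I_{k}^{a}$ (resp. $I_{k}^{b}$), into a \emph{diffusive part} $(1-I_{k}^{a})(-1)^{R_{k}^{a}}[U_{k}^{a}/(\sqrt n\delta_{n})]\delta_{n}$, a \emph{jump part} $I_{k}^{a}[\bar s^{n}(t_{k})V_{k}^{a}/(2\delta_{n})]\delta_{n}$, and a nonpositive \emph{reflection correction} produced by the cap $\vee([-\bar s^{n}(t_{k})/(2\delta_{n})]\delta_{n})$, whose accumulation I will identify with the regulator $L$. Since the per-side market-order rate is $\mu_{n}=n\mu$, the FLLN for Poisson processes gives $N^{n}(\cdot)/n\Rightarrow 2\mu(\cdot)$ uniformly on $[0,T]$, so I may either replace the random index set by a deterministic count $\approx n\mu t$ per side up to a uniformly vanishing error, or keep the Poisson clock and argue with compensators; I take the latter.

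Next I treat the diffusive and the jump parts. The diffusive summands are i.i.d.\ across $k$ (they do not depend on $\bar s^{n}$); writing $g_{n}(u)=[u/(\sqrt n\delta_{n})]\sqrt n\,\delta_{n}$, which satisfies $g_{n}(u)\to u$ when $\delta_{n}=o(n^{-1/2})$ and $g_{n}(u)=[u]$ when $\delta_{n}=n^{-1/2}$ (the sole source of the two cases in items 1--2), each diffusive increment has mean of order $\beta E[g_{n}(U_{k}^{a})]/n$ and second moment of order $E[g_{n}(U_{k}^{a})^{2}]/n$, with $U_{k}^{a}$ bounded by $\xi$. Centering and applying the martingale FCLT---verifying that the predictable quadratic variation of the centered partial sums converges to $\sigma^{2}t$ and that the Lindeberg/asymptotic-negligibility condition holds, both immediate from boundedness---the ask-side centered sums converge to a Brownian motion $W_{a}$ of variance rate $\sigma^{2}$ and the bid-side sums to an independent $W_{b}$, independence coming from the disjoint (independent) arrival streams that update the two sides; collecting the means over the $\approx n\mu t$ steps produces the deterministic drift, which enters $d\bar s$ as $-\eta\,dt$ and cancels in $d\bar m$, fixing $\eta$ and $\sigma^{2}$ as in items 1--2. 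For the jump part, $\{I_{k}^{a}=1\}$ is an independent $\gamma/n$-thinning of the rate-$n\mu$ ask clock, hence converges to a Poisson process of rate $\gamma\mu$; marking these events with $V_{k}^{a}$ (and the bid events with $V_{k}^{b}$), convergence of marked point processes shows the two mark streams converge jointly to independent Poisson random measures on $[0,T]\times\mathbb R$ of intensity $\gamma\mu\,dt\times f_{V}(dv)$, and on such an event the increment $[\bar s^{n}(t_{k})V_{k}^{a}/(2\delta_{n})]\delta_{n}\to\bar s^{n}(t_{k}-)V_{k}^{a}/2$. This coupling yields the state-dependent jump terms $\bar s(t_{-})\,dJ_{1}(t)/2$ and $\bar s(t_{-})\,dJ_{2}(t)/2$ in $d\bar s$ and $\bar s(t_{-})\,dJ_{1}(t)/2-\bar s(t_{-})\,dJ_{2}(t)/2$ in $d\bar m$, the same ask-side and bid-side point processes driving both coordinates (so the $Z_{i}$ in (\ref{jump diffusion limit}) are identified with the $J_{i}$).

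For the reflection and the assembly of the limit, note that the cap forces $\bar s^{n}\ge 0$; collecting the reflection corrections into a process $L^{n}$ that is nondecreasing and increases only when $\bar s^{n}$ sits at its lattice minimum, $L^{n}$ is automatically tight. Putting the pieces together, $(\bar s^{n},\bar m^{n})$ is, up to uniformly negligible errors, the solution of a discrete constrained recursion driven by $(W_{a}^{n},W_{b}^{n},J_{1}^{n},J_{2}^{n},L^{n})$ whose drivers converge jointly. I would then (a) establish $C$-tightness of the continuous parts and $D[0,T]$-tightness of the whole family via the Aldous/Jakubowski criteria, using the linear growth of the coefficient $s\mapsto s/2$ together with Gronwall to bound moments of $\sup_{t\le T}\bar s^{n}(t)$ and of $L^{n}(T)$; (b) identify every subsequential limit as a solution of (\ref{jump diffusion limit})---a reflected jump-diffusion with globally Lipschitz, linearly growing coefficients---either by a martingale-problem characterization plus verification of the Skorokhod conditions $\bar s\ge0$, $dL\ge0$, $\int\bar s\,dL=0$, or by a Kurtz--Protter stability argument for SDEs driven by converging semimartingales composed with the (continuous) one-sided Skorokhod map; and (c) invoke pathwise uniqueness for (\ref{jump diffusion limit}), which follows from a Picard/Gronwall estimate since the coefficients are Lipschitz and of linear growth and the reflection is at the single fixed boundary $\{0\}$, whence the full sequence converges to the unique such $(\bar s,\bar m)$.

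I expect the core difficulty to lie in steps (b)--(c): the jumps are \emph{state dependent} (sizes proportional to $\bar s^{n}(t_{-})$, the very process being constructed) \emph{while} the spread is simultaneously reflected at $0$, so one must prove a stability statement for a ``Skorokhod problem with jumps''---that the constrained discrete solutions converge to the constrained continuous-time solution---and then uniqueness of the latter. The delicate point is that a large downward jump could a priori push $\bar s$ below the boundary between reflections; but the structure $\bar s(t_{-})(1+V^{a}/2)$ with $V^{a}\ge -1$ (Assumption \ref{ass:price return}, condition (iv)) guarantees $\bar s$ stays $\ge 0$ right after every jump, so the regulator $L$ never has to counteract a jump and the reflection map is applied to a process whose jumps are summable and boundary-respecting---precisely the observation that makes the Gronwall uniqueness estimate and the continuous-mapping step go through.
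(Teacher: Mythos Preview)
Your high-level machinery---martingale FCLT for the bounded diffusive increments, Poisson thinning for the state-dependent jumps, piecing the solution together across the (finitely many) jump times, and the observation that $V\ge -1$ keeps the jumped spread nonnegative so the regulator never has to absorb a jump---is essentially what the paper uses to prove convergence of an \emph{auxiliary} process $(\tilde S^{n},\tilde M^{n})$ defined by $\tilde S^{n}(t_{k+1})=\tilde S^{n}(t_{k})+(\Delta_{a}^{n}+\Delta_{b}^{n})\vee(-\tilde S^{n}(t_{k}))$ and $\tilde M^{n}(t_{k+1})=\tilde M^{n}(t_{k})+\Delta_{a}^{n}-\Delta_{b}^{n}$. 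For that process the reflection is the standard one-sided Skorokhod map and your continuous-mapping/Kurtz--Protter plan goes through cleanly. The paper then shows separately that $(\bar s^{n}-\tilde S^{n},\bar m^{n}-\tilde M^{n})\Rightarrow(0,0)$.

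The gap in your proposal is that you try to apply the Skorokhod map directly to $(\bar s^{n},\bar m^{n})$, and the prelimit constraint is \emph{not} the one-sided Skorokhod reflection. In (\ref{ask bid dynamics}) each of $\Delta_{a}^{n}$ and $\Delta_{b}^{n}$ is capped separately at $[-\bar s^{n}/(2\delta)]\delta$, so the spread increment is $(\Delta_{a}^{n}\vee(-s/2))+(\Delta_{b}^{n}\vee(-s/2))$, not $(\Delta_{a}^{n}+\Delta_{b}^{n})\vee(-s)$. The two differ on the boundary layer $\{\bar s^{n}\le 2/\sqrt n\}$, and more importantly the \emph{same} capping produces a correction in the mid-price increment $(\Delta_{a}^{n}\vee(-s/2))-(\Delta_{b}^{n}\vee(-s/2))$, which is not a regulator at all (the two nonnegative pushes enter with opposite signs). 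Your plan to ``collect the reflection corrections into $L^{n}$'' therefore does not explain why $\bar m$ has no $dL$ term in the limit; this is a genuine extra estimate, not a continuity-of-the-Skorokhod-map statement. The paper handles it by showing that the mid-price boundary correction is a martingale whose quadratic variation is bounded by $(4/n)\sum_{k\le nt} I(\bar s^{n}(t_{k})<2/\sqrt n)$, i.e.\ by the occupation time of the boundary layer, and then uses the already-established convergence of $\tilde S^{n}$ to a reflected Brownian motion (plus jumps) to conclude that this occupation time vanishes. Likewise, that $|\bar s^{n}-\tilde S^{n}|$ is $O(n^{-1/2})$ is proved by a short induction across jump times, not by continuity of a known map. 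If you want to avoid the auxiliary process, you would still need to isolate and kill these two boundary-layer errors explicitly; the Kurtz--Protter stability statement you invoke does not see them because the prelimit is not written as the Skorokhod map applied to a converging semimartingale.
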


\bigskip

\section{Simulation Results\label{Section_Empirical_Validation}}

We simulate the pair of the spread and mid-price processes according to
their asymptotic approximation $(\bar{s}(\cdot ),M(\cdot ))$ as given by (%
\ref{jump diffusion limit}) under different parameters. We use the
distribution%
\begin{equation*}
f_{U}(x)=r/\xi ,\text{ for }x\in (0,\xi ]\text{ and }P(U=0)=1-r,
\end{equation*}%
and
\begin{equation*}
f_{V}(x)=\frac{(u-1)(\rho +x)^{-u}}{2(\rho ^{1-u}-(c+\rho )^{1-u})}I\left(
x\in \left( 0,c\right) \right) +\frac{(u-1)(\rho -x)^{-u}}{2(\rho
^{1-u}-(1+\rho )^{1-u})}I\left( x\in \left( -1,0\right) \right) ,
\end{equation*}%
for $u\in (1,3]$, and $\rho ,c>0$. We have chosen $f_{V}(\cdot )$ so that in
the pre-limit, the distribution of the orders inside the order book yields a
power-law tail (when $i>0$ is big) so that for fixed $a,b$,
\begin{equation*}
p(i\delta ;a,b)\propto \frac{1}{(c_{2}+i\delta )^{u}},
\end{equation*}%
for some $c_{2}>0$. This choice is justified in view of the following
empirical observation.

\textbf{Empirical Observation 3: Distribution of limit orders inside the
order book. }

Power-law decaying tails in the distribution of the relative prices of
incoming limit orders inside the book have been reported in several
empirical studies on order books in different financial markets (see for
instance \citet{Bouchard&etal_2002}, \citet{Potters&Bouchard_2003} and %
\citet{Zovko&Farmer_2002}). Market data suggest that although incoming limit
orders concentrate around the bid or ask price (according to %
\citet{Bouchard&etal_2002}, half of the limit orders have relative tick
price $-1,0$ and $1$), they spread widely on the order book and the tail of
the relative price, either buy or sell, can be well approximated by a
power-law with some power index $u>0$ (i.e. the proportion of orders at $i$
ticks away from the best quote is proportional to $1/\left( c_{1}+i\right)
^{u}$ for some $c_{1}>0$). The index $u$ varies among different financial
markets as reported in \citet{Bouchard&etal_2002}, %
\citet{Potters&Bouchard_2003} and \citet{Zovko&Farmer_2002} with values $%
u\in (1,3]$. It is also observed that the relative price distributions are
basically symmetric on the sell and buy sides. Moreover, empirical
observations also show that a substantial part of the limit sell (and buy)
orders is clustered close to the ask (and bid) price, as is captured by the
first term involving $U$ (see Remark 1 under Assumption \ref{ass:price
return}).

\bigskip

We are choosing the parametric family of our price-return distribution
directly to match empirical features of the distribution of orders in the
book, so here implicitly we are assuming $c_{p}=1$. This parameter can be
adjusted to better reflect tail behavior of the empirical price-return
distribution.

\bigskip

We proceeded to simulate the spread and mid-price processes according to
their asymptotic approximation $(\bar{s}(\cdot ),M(\cdot ))$ as given by (%
\ref{jump diffusion limit}) under different parameters. We then compute the
stationary distribution of the spread and the volatility process of the
mid-price return from the simulation data. The computation results show that
the joint jump-diffusion dynamics of the spread and mid-price (\ref{jump
diffusion limit}) derived from our LOB model can capture several stylized
features in real spread and price data as reported in \citet{Wyart&etal_2008}%
, \citet{Gould&etal_2012} and the references therein.\bigskip

\textbf{Stationary Distribution of the Spread: } In \citet{Wyart&etal_2008},
the authors study the spread size immediately before trade from Philippine
Stock Exchange market data and they find that the stationary distribution of
the spread is close to an exponential while in some other markets, the
stationary distribution of the spread admits a power-law. We simulated the
spread process $\bar{s}(\cdot )$ according to (\ref{jump diffusion limit})
and estimate the mean $E[\bar{s}(\infty )]$ and standard deviation $std[\bar{%
s}(\infty )]=\sqrt{E\pi \lbrack {(\bar{s}(\infty )-E[\bar{s}(\infty )])^{2}}]%
}$ of the spread under its stationary distribution. In particular, we
estimate expectations under the stationary distribution by the path-average
of the simulated $\bar{s}(\cdot )$. The results are reported in Table \ref%
{table 1} and show that the mean $E[\bar{s}(\infty )]$ is close to the
standard deviation $std[\bar{s}(\infty )]$.

Figure \ref{Figure 2} compares the empirical distribution of the simulated
spread data of the spread $\bar{s}$ under the parameter set (b) and the
exponential distribution with the same mean. Although the stationary
distribution of $\bar{s}$ is roughly well fitted by the exponential
distribution with the same mean as shown in Figure \ref{Figure 2} (a), its
tail is much heavier than exponential and resembles a power-law tail as
shown in Figure \ref{Figure 2} (b). Intuitively, the limit spread process (%
\ref{jump diffusion limit}) is a reflected Brownian motion with jumps. In
this light, one could argue that the stationary distribution of the spread
could be well approximated by a mixture of an exponential distribution and a
power-law distribution because the reflected Brownian motion admits an
exponential stationary distribution and the jump size follows some power-law
distribution.\newline
\begin{table}[h]
\center
\begin{tabular}{c|ccccc|cc}
\hline
& $u$ & $\beta$ & $\xi$ & $r$ & $\mu\gamma$ & $E[\bar{s}(\infty)]$ & $std[%
\bar{s}(\infty)]$ \\ \hline
(a) & 2.8 & 0.25 & 0.02 & 0.25 & 6.75 & 0.1704 & 0.2068 \\
(b) & 2.3 & 0.25 & 0.02 & 0.25 & 6.75 & 0.1812 & 0.2273 \\
(c) & 2.8 & 0.5 & 0.025 & 0.25 & 6.75 & 0.0957 & 0.1056 \\
(d) & 2.8 & 0.25 & 0.02 & 0.5 & 4.5 & 0.1576 & 0.1663 \\ \hline
\end{tabular}%
\caption{The mean and standard deviation of $\bar{s}$ in stationarity under
different sets of parameters. $\protect\rho =0.02 $ and $\protect\mu=9 $ are
the same for case (a) to (d). }
\label{table 1}
\end{table}

\begin{figure}[tbp]
\center
\begin{tabular}{c}
\subfloat[]{\includegraphics[scale=0.5]{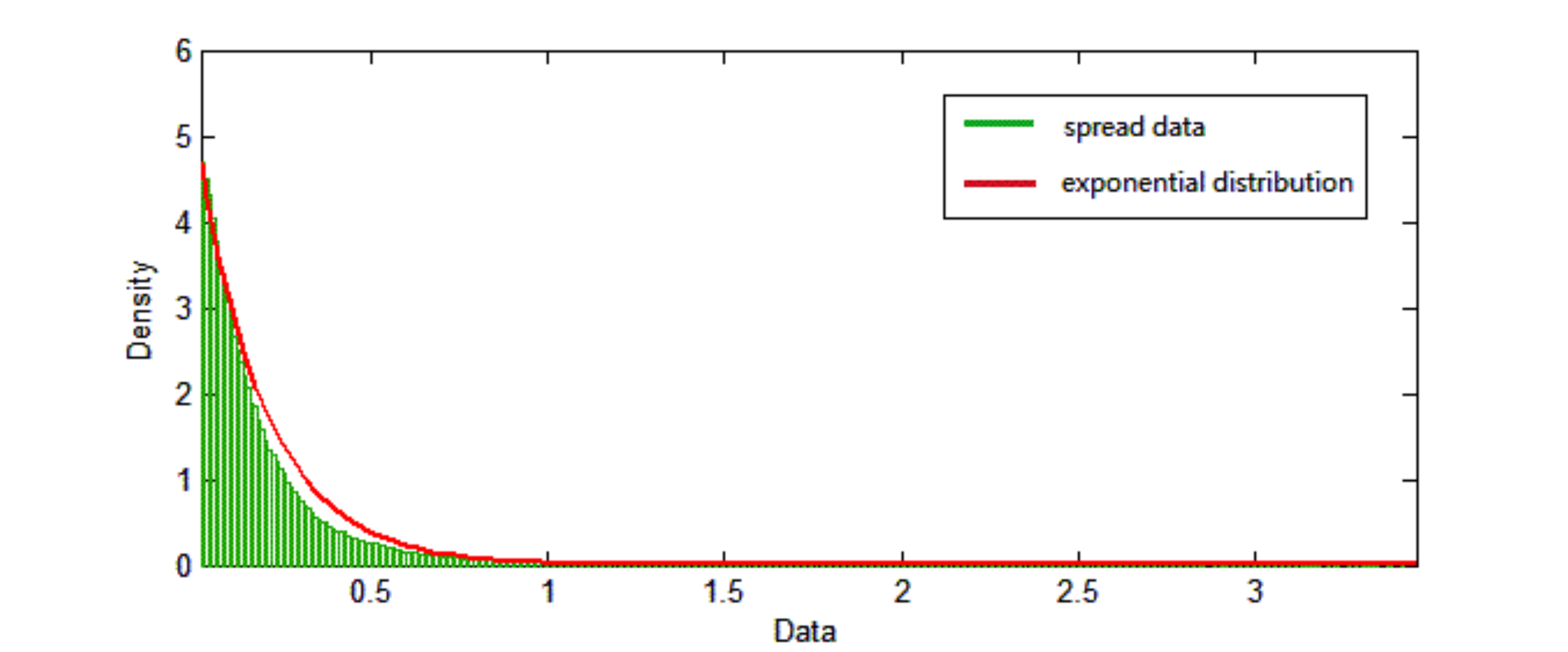}} \\
\subfloat[]{\includegraphics[scale=0.5]{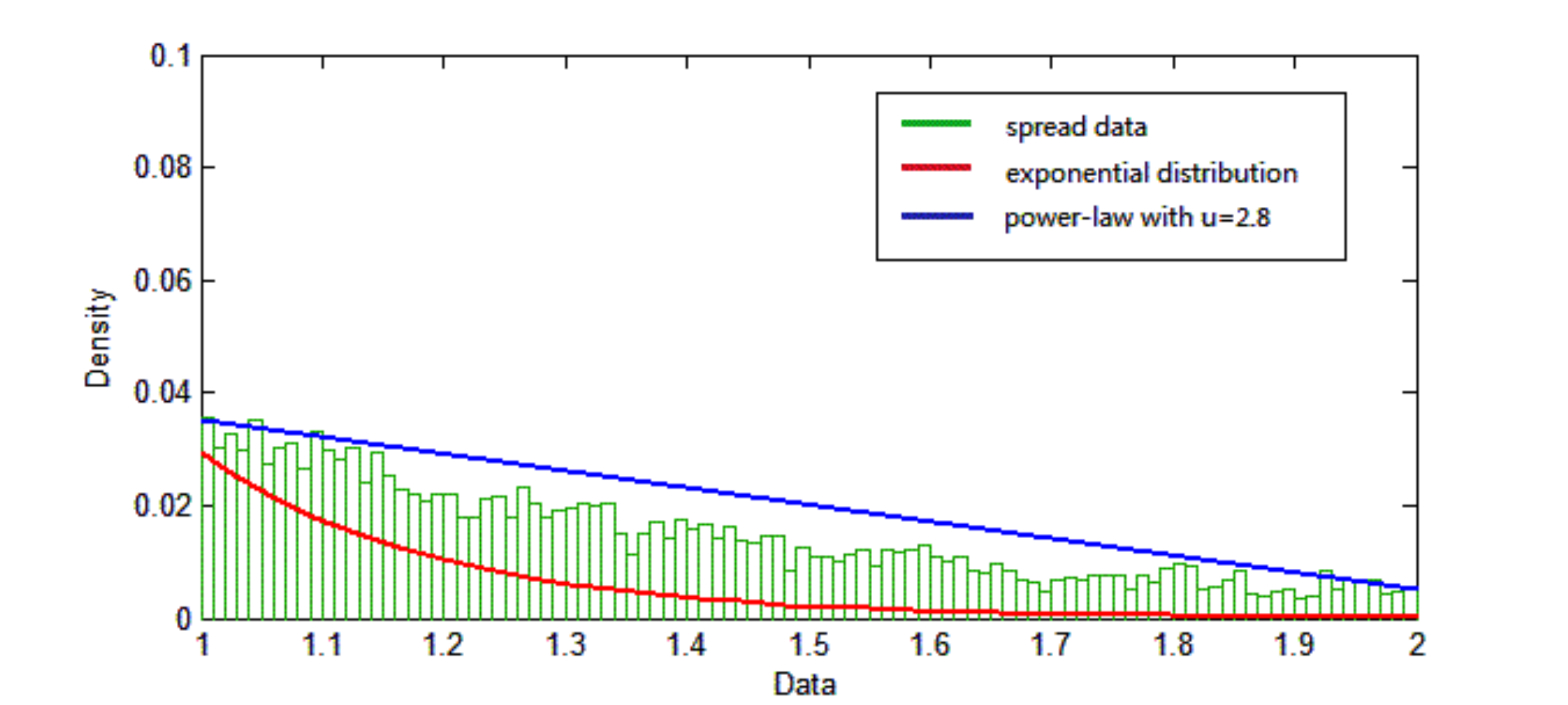}} \\
\end{tabular}%
\caption{Power-law Decaying Tail of $\bar{s}$ }
\label{Figure 2}
\end{figure}
\bigskip \textbf{Correlation between Spread and Volatility:} We study the
relation between the spread and the volatility of the mid-price return
\textit{per trade} as in \citet{Wyart&etal_2008}. In their paper, the
volatility of the mid-price return per trade is computed from empirical data
as $\sigma_1=\sqrt{\sum_{i=1}^N(m_i-m_{i-1})^2/N}$, where $N$ is the number
of trades that has been observed and $m_i$ is the mid-price \textit{per trade%
}. They find a strong linear relationship between $\sigma_1$ and the mean
spread in stationary distribution $E[\bar{s}(\infty)]$. Our model also
capture the linear relationship between the volatility of the mid-price
return and the spread \textit{per trade}. To see this, we first simulate $(%
\bar{s}(\cdot),\bar{m}(\cdot))$ according to (\ref{jump diffusion limit}).
Since $(\bar{s}(\cdot),\bar{m}(\cdot))$ is the limit of the price and spread
\textit{per trade} when the arrival rate of trades $\to \infty$, we estimate
the volatility $\sigma_1$ (up to some constance) of the mid-price return per
trade by
\begin{equation*}
\hat{\sigma}_1^2=\lim_{n\to\infty}\frac{1}{n}\sum_{k=1}^n(\bar{m}(k\Delta t)-%
\bar{m}((k-1)\Delta t))^2/\Delta t,
\end{equation*}
and we choose $\Delta t=0.1$ unit of time. We compute $\hat{\sigma}_1$ as
the path average of the simulated path of $\bar{m}(\cdot)$. We also compute $%
E[\bar{s}(\infty)]$ as the path average taking at every $\Delta t=0.1$ unit
of time interval.
\begin{table}[h]
\center
\begin{tabular}{c|cccccc|cc}
\hline
& $u$ & $\xi$ & $r$ & $\mu$ & $\beta$ & $\mu\gamma$ & $E[\bar{s}(\infty)]$ &
$\hat{\sigma}_1$ \\ \hline
1 & 2.8 & 0.08 & 0.25 & 12 & 0.25 & 9 & 0.1704 & 0.0822 \\
2 & 2.8 & 0.4 & 0.25 & 12 & 0.25 & 9 & 0.7934 & 0.4074 \\
3 & 2.8 & 0.8 & 0.25 & 12 & 0.25 & 9 & 1.5862 & 0.8169 \\
3 & 2.3 & 0.08 & 0.25 & 12 & 0.25 & 9 & 0.1812 & 0.0885 \\
4 & 2.3 & 0.08 & 0.5 & 6 & 0.25 & 3 & 0.1696 & 0.0848 \\
5 & 2.3 & 0.08 & 0.75 & 4 & 0.25 & 1 & 0.1559 & 0.0812 \\ \hline
\end{tabular}%
\caption{Estimation of $E[\bar{s}(\infty )]$ and $\hat{\protect\sigma}_1$
under different parameters.}
\label{table 2}
\end{table}

The simulation results reported in Table \ref{table 2} indicate a linear
relation between $E[\bar{s}(\infty )]$ and $\hat{\sigma}_{1}$ that is found
in \citet{Wyart&etal_2008}. Heuristically, without the jump part in (\ref%
{jump diffusion limit}), $\bar{s}(\cdot )$ becomes a one dimensional
reflected Brownian motion with drift $\xi \beta $ and variance coefficient $%
\xi ^{2}r\mu /3$, and the mid-price is simply a Brownian motion with
variance coefficient $\xi ^{2}r\mu /3$. It is known that the stationary
distribution of a reflected Brownian motion is exponential and one can
compute that $\hat{\sigma}_{1}=\xi r\mu /(6\beta )$. Also, in the case of no
jumps, we can clearly evaluate $\hat{\sigma}_{1}=\xi \sqrt{r\mu /3}$.
Therefore, the mean spread and the mean volatility have a linear
relationship of the form $E[\bar{s}(\infty )]=l\times \hat{\sigma}_{1}$ with
$l=\sqrt{r\mu }/(2\beta \sqrt{3})$. In Table \ref{table 2}, we choose
different sets of parameters such that $\sqrt{r\mu }/(2\beta \sqrt{3})\equiv
2$ and one can check that the estimated mean $E[\bar{s}(\infty )]\approx 2%
\hat{\sigma}_{1}$, so the effect of the jumps is actually relatively minor
on the parameter ranges that we explored, for this particular performance
measure.



\bigskip \textbf{Volatility Clustering: } The jump-diffusion limit (\ref%
{jump diffusion limit}) also captures the volatility clustering feature in
limit order book data as reported in a series of empirical studies (see
Section G.2 in \citet{Gould&etal_2012}). To see this, we measure the
volatility in the mid-price process as the standard deviation of the
mid-price return per 0.1 unit of time over every 10 units long time window.
In detail, we compute a time series $\bar{\sigma}(t)$ from the simulated
mid-price process $\bar{m}$ as
\begin{equation*}
\bar{\sigma}(t)=\sqrt{\frac{1}{99}\sum_{i=1}^{100}(\bar{m}(10t+0.1i)-\frac{1%
}{100}\sum_{j=1}^{100}\bar{m}(10t+0.1j))^2}.
\end{equation*}
To illustrate volatility clustering, in Figure \ref{Figure 3}, we compare
the original time series of the volatility we have computed and its random
permutation. In the original time series, peaks are gathering together while
in the permuted time series, peaks are uniformly distributed along the time.
\begin{figure}[tbp]
\center
{\includegraphics[scale=0.5]{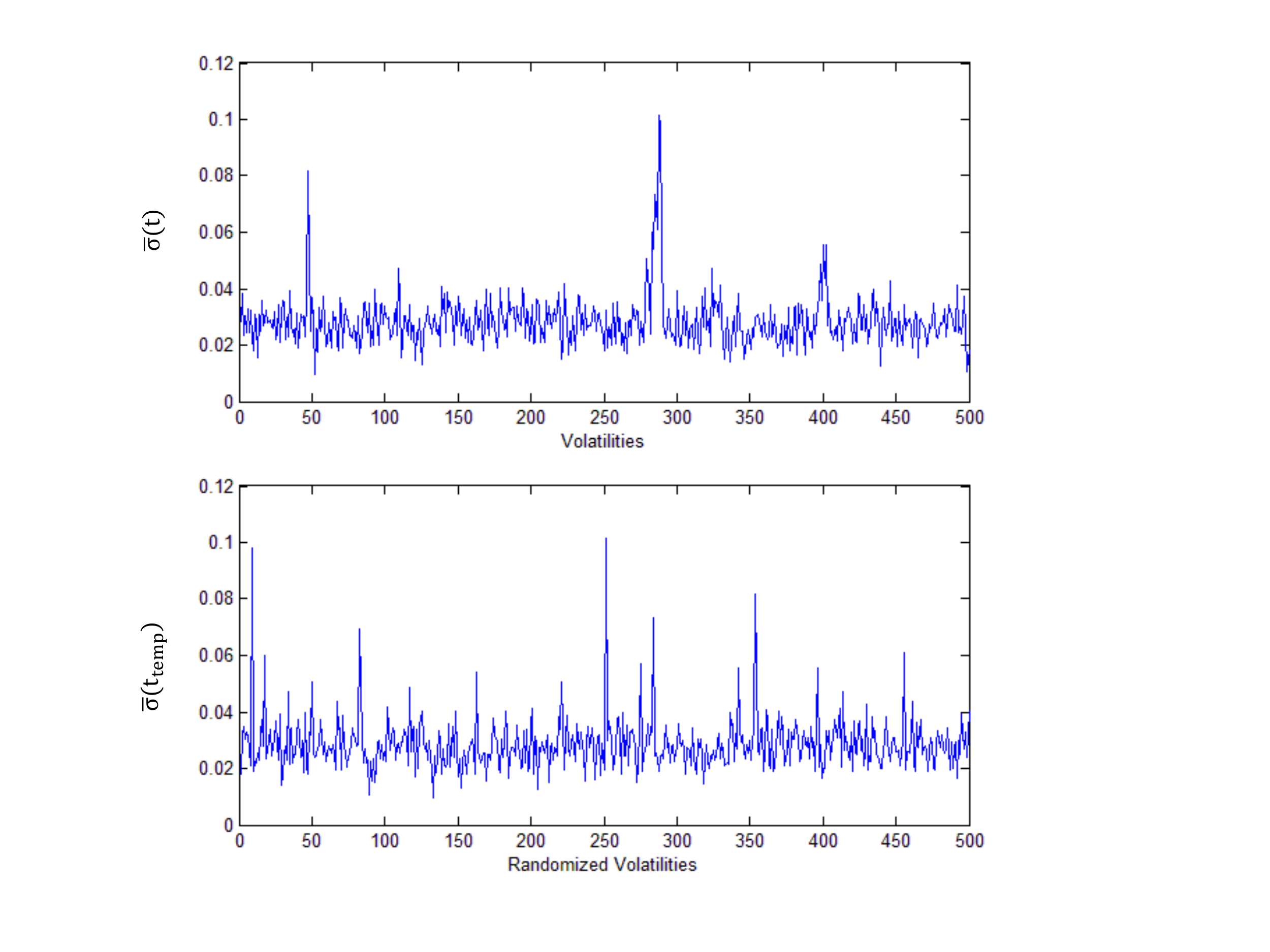}}
\caption{Volatility clustering of realized mid-price returns in simulation
data }
\label{Figure 3}
\end{figure}

\section{Appendix: Technical Proofs\label{Section:App_Technical_Proofs}}

In this section we provide all of the technical proofs in the order in which
the results have been presented.

\subsection{The Proof of Theorem \protect\ref{THM stochastic averaging}\label%
{App_Sect_Proof_THM stochastic averaging}}

\begin{proof}[Proof of Theorem \protect\ref{THM stochastic averaging}]
In this proof, we will follow the notations as used in \citet{Kurtz_1992}.
Define $X_{n}(\cdot )=(\bar{a}_{n}(\cdot ),\bar{b}_{n}(\cdot ))\in \mathbb{Z}%
^{2}$. Define $Y_{n}(\cdot )=(Y_{n}^{i}(\cdot ):i\in \mathbb{Z})\in \mathbb{Z%
}^{\infty }$, where $Y_{n}^{i}(t)$ equals the number of limit sell orders
(or minus the number of limit buy orders) on price tick $i$ in the $n$-th
LOB system at time $t$. We denote by $E_{1}$ and $E_{2}$ respectively the
space of $\mathbb{Z}^{2}$ and $\mathbb{Z}^{\infty }$ that are endowed with
the discrete topology. Then, $\{(X_{n}(\cdot ),Y_{n}(\cdot ))\}$ is a
sequence of stochastic processes living in the product space $E_{1}\times
E_{2}$.

Note that for each $n$, $(X_{n}(\cdot ),Y_{n}(\cdot ))$ is a continuous time
Markov Chain with countable number of states. Let $\{\mathfrak{F}_{t}^{n}\}$
be the natural filtration associated with $(X_{n}(\cdot ),Y_{n}(\cdot ))$.
Then, one can check that for any $f\in C(E_{1})$,%
\begin{equation*}
f(X_{n}(t))-\int_{0}^{t}2\mu
(f(a(Y_{n}(s)),b(Y_{n}(s)))-f(X_{n}(s)))ds\triangleq f(X_{n}(t))-\int_{0}^{t}%
\mathcal{A}f(X_{n}(s),Y_{n}(s))ds
\end{equation*}%
is a martingale with respect to $\{\mathfrak{F}_{t}^{n}\}$. Here the
functional $a:\mathbb{R}^{\infty }\rightarrow \mathbb{R}$ is just the ask
price of the LOB at time $s$, more precisely,
\begin{equation*}
a(Y_{n}(s))=\min \{i:Y_{n}^{i}(s)>0\}.
\end{equation*}%
The functional $b(\cdot )$ can be defined in a similar way. One can also
check that for any $g\in C(E_{2})$,
\begin{align*}
& g(Y_{n}(t))-\int_{0}^{t}\sum_{i}\xi _{n}\left[ \lambda p(i;X_{n}(s))\left(
g(Y_{n}(s)+e_{\bar{a}_{n}(s)+i})+g(Y_{n}(s)+e_{\bar{b}%
_{n}(s)-i})-2g(Y_{n}(s))\right) \right. \\
+& \left. \alpha (i;X_{n}(s))\left( Y_{n}^{\bar{a}%
_{n}(s)+i}(s)(g(Y_{n}(s)-e_{\bar{a}_{n}(s)+i})-g(Y_{n}(s)))+Y_{n}^{\bar{b}%
_{n}(s)-i}(s)(g(Y_{n}(s)-e_{\bar{b}_{n}(s)-i})-g(Y_{n}(s)))\right) \right] ds
\\
\triangleq & ~g(Y_{n}(t))-\int_{0}^{t}\xi _{n}\mathcal{B}%
g(X_{n}(s),Y_{n}(s))ds
\end{align*}%
is also a martingale. Due to the regularity condition imposed (\ref%
{Eq:Reg_Cond_Thm1}), $\{X_{n}(\cdot )\}$ is tight. Therefore, each
subsequence of $\{X_{n}(\cdot )\}$ admit a sub-subsequence that converges
weakly to some sub-limit process $X(\cdot )$. Then according to Theorem 2.1
and the subsequent Example 2.4 in \citet{Kurtz_1992}, each sub-limit process
$X(\cdot )=(\hat{a}(\cdot ),\hat{b}(\cdot ))$ is a solution to the
martingale problem
\begin{equation}
f(X(t))-\int_{0}^{t}\int_{E_{2}}\mathcal{A}f(X(s),y)\pi _{X(s)}(dy)ds,
\label{Xt martingale problem}
\end{equation}%
in the sense that the stochastic process defined by (\ref{Xt martingale
problem}) is a martingale. Moreover, in the expression (\ref{Xt martingale
problem}), $\pi _{X(s)}(\cdot )$ is the unique stationary distribution of a
stochastic process $Y\in E_{2}$ which satisfies the martingale problem
\begin{equation*}
g(Y(t))-\int_{0}^{t}\mathcal{B}g(x,Y(u))du.
\end{equation*}%
In our case, $\pi _{x}(\cdot )$ is simply the stationary distribution of the
LOB system under the parameters $\{p(i;x),\alpha (i;x)\}$.

Now we compute that in (\ref{Xt martingale problem}),
\begin{align*}
& \int_{E_{2}}\mathcal{A}f(X(s),y)\pi _{X(s)}(dy) \\
=& \sum_{i,j}\mu (f(\hat{a}(\cdot )+i,\hat{b}(\cdot )+j)-f(\hat{a}(\cdot ),%
\hat{b}(\cdot )))\pi _{X(s)}(\{a(y)-\hat{a}_{n}(s)=i,b(y)-\hat{b}%
_{n}(s)=-j\}) \\
=& \sum_{i,j}\mu (f(\hat{a}(\cdot )+i,\hat{b}(\cdot )+j)-f(\hat{a}(\cdot ),%
\hat{b}(\cdot )))[\theta (i;\hat{a}(t),\hat{b}(t))-\theta (i+1;\hat{a}(s),%
\hat{b}(s))] \\
& \times \lbrack \theta (j;\hat{a}(s),\hat{b}(s))-\theta (j+1;\hat{a}(s),%
\hat{b}(s))].
\end{align*}%
One can check that the martingale problem (\ref{Xt martingale problem}) has
a unique solution $X(\cdot )$, see for instance Chapter 4.4 in %
\citet{Ethier&Kurtz_1986}. In particular, $X(\cdot )=(\hat{a}(\cdot ),\hat{b}%
(\cdot ))$ is equivalent in distribution to a jump process with jump
intensity $2\mu $ and jump size distribution
\begin{align*}
& P\left( \hat{a}(t)-\hat{a}(t-)=i,\hat{b}(t)-\hat{b}(t-)=j|\hat{a}(t-),\hat{%
b}(t-)\right) \\
=~& [\theta (i;\hat{a}(t-),\hat{b}(t-))-\theta (i+1;\hat{a}(t-),\hat{b}%
(t-))]\times \lbrack \theta (j;\hat{a}(t-),\hat{b}(t-))-\theta (j+1;\hat{a}%
(t-),\hat{b}(t-))].
\end{align*}%
Since $\{X_{n}(\cdot )\}$ is tight and each of its convergent subsequence
admits the same limit $X(\cdot )$, we can conclude that $\{X_{n}(\cdot )\}$
weakly converges to $X(\cdot )$.
\end{proof}

\bigskip

\subsection{The Proof of Theorem \protect\ref{Thm_Main}\label%
{App_Sect_Proof_THM_MAIN}}

Now let us describe the roadmap for the proof of Theorem \ref{Thm_Main}. We
first construct some auxiliary process $(\tilde{S}^{n}(\cdot ),\tilde{M}%
^{n}(\cdot ))$ living in the same probability space as the underlying
process $(\bar{s}^{n}(\cdot ),\bar{m}^{n}(\cdot ))$. The auxiliary process
is a convenient Markov process whose generator can be analyzed to conclude
weak convergence to the postulated limiting jump diffusion (\ref{jump
diffusion limit}). The auxiliary process has the same dynamics as the target
process except when it is on the boundary-layer set $[0,2/\sqrt{n}]\times
\mathbb{R}$. We also show the time spent by the two processes on the
boundary-layer is small and as a result their difference caused by their
different dynamics on the boundary is also small. Actually, such difference
is negligible as $n\rightarrow \infty $ and therefore the target process
converges to the same limit process.

First, we define the auxiliary process coupled with the target process in a
path by path fashion. Recall that by Assumption \ref{price return} and (\ref%
{ask bid dynamics}), we can write
\begin{equation*}
\begin{cases}
\bar{s}^{n}(t_{k+1}) & =\bar{s}^{n}(t_{k})+\Delta _{a}^{n}(\bar{s}%
^{n}(t_{k}))\vee ([-\bar{s}^{n}(t_{k})/2\delta]\delta)+\Delta _{b}^{n}(\bar{s%
}^{n}(t_{k}))\vee ([-\bar{s}^{n}(t_{k})/2\delta]\delta), \\
\bar{m}^{n}(t_{k+1}) & =\bar{m}^{n}(t_{k})+\Delta _{a}^{n}((\bar{s}%
^{n}(t_{k})))\vee ([-\bar{s}^{n}(t_{k})/2\delta]\delta)-\Delta _{b}^{n}(\bar{%
s}^{n}(t_{k}))\vee ([-\bar{s}^{n}(t_{k})/2\delta]\delta).%
\end{cases}%
\end{equation*}

Now we define the auxiliary process $(\tilde{S}^{n}(\cdot ),\tilde{M}%
^{n}(\cdot ))$ coupled with $(\bar{s}^{n}(\cdot ),\bar{m}^{n}(\cdot ))$) as
\begin{equation}
\begin{cases}
\tilde{S}^{n}(t_{k+1}) & =\tilde{S}^{n}(t_{k})+(\Delta _{a}^{n}(\tilde{S}%
^{n}(t_{k}))+\Delta _{b}^{n}(\tilde{S}^{n}(t_{k})))\vee (-\tilde{S}%
^{n}(t_{k})), \\
\tilde{M}^{n}(t_{k+1}) & =\tilde{M}^{n}(t_{k})+\Delta _{a}^{n}((\tilde{S}%
^{n}(t_{k})))-\Delta _{b}^{n}(\tilde{S}^{n}(t_{k})),%
\end{cases}%
\end{equation}%
with the initial condition $\tilde{S}^{n}(0)=\bar{s}^{n}(0)$ and $\tilde{M}%
^{n}(0)=\bar{m}^{n}(0)$. 

Then the main result in this section Theorem \ref{Thm_Main} is an immediate
corollary of the following two propositions.

\begin{proposition}
\label{prop auxiliary converge} The auxiliary process $(\tilde{S}^n(\cdot),%
\tilde{M}^n(\cdot))$ converges weakly to the limit process given by (\ref%
{jump diffusion limit}).
\end{proposition}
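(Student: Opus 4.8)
The plan is to prove weak convergence of the discrete-time process $(\tilde{S}^n(\cdot),\tilde{M}^n(\cdot))$ — evaluated along the market-order arrival times $\{t_k\}$ and interpolated as a step function — to the jump-diffusion \eqref{jump diffusion limit} by a generator/martingale-problem argument of the standard heavy-traffic type (see \citet{Ethier&Kurtz_1986}, Ch.\ 4 and 7). Since $\{t_k\}$ are the arrival times of a rate-$n\mu$ Poisson process (on each side, so $2n\mu$ total) independent of everything else, the first step is to decompose each one-step increment of $(\tilde S^n,\tilde M^n)$ according to the three independent sources of randomness in $\Delta_a^n,\Delta_b^n$: the ``clustered'' term $(1-I_k)(-1)^{R_k}[U_k/(\sqrt n\delta)]\delta$, which contributes a small mean (of order $\beta/\sqrt n$ per step, because $P(R_k=1)=(1+2\beta/\sqrt n)/2$) and a variance of order $1/n$ per step, and the ``far'' term $I_k[\bar s^n(t_k)V_k/(2\delta)]\delta$, which fires only with probability $q_n=\gamma/n$ per step and hence, across $\Theta(n)$ steps per unit time, produces $O(1)$ jumps. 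I would first handle the case $\delta_n=n^{-1/2}$, where the rounding $[\,\cdot\,]$ is nontrivial and the relevant moments become $E([U_1^a])$, $E([U_1^a]^2)$, etc.; the case $\delta_n=o(n^{-1/2})$ is obtained by the same computation with $[U_1^a/(\sqrt n\delta_n)]\delta_n\to U_1^a$ and $[\bar s V_k/(2\delta_n)]\delta_n\to \bar s V_k/2$, replacing the bracketed moments by their continuous analogues.

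Concretely, I would verify the infinitesimal-generator convergence: for $\phi\in C_c^2(\mathbb{R}^+\times\mathbb{R})$ (or $C_c^3$, to be safe with the jump-term error estimates), write the prelimit generator $\mathcal L^n\phi(s,m) = 2n\mu\, E[\phi(s+\Xi_s^n,\, m+\Theta_s^n)-\phi(s,m)]$, where $(\Xi_s^n,\Theta_s^n)$ is the one-step increment of $(\tilde S^n,\tilde M^n)$ from state $(s,m)$, and expand. The diffusive part: $2n\mu$ times a per-step drift $\approx -\,\frac{\beta}{\sqrt n}\cdot\frac{1}{\sqrt n}\cdot(2\mu)\,E[U_1^a]\cdot(\text{const})$ should reproduce $-\eta\,\partial_s\phi$ with $\eta=2\mu\beta E([U_1^a])$ (and no drift in $m$, since the two sides cancel in $\bar m=\bar a-\bar b$ but the $U$-drifts add in $\bar s=\bar a-\bar b$… I would double-check the sign conventions against \eqref{ask bid dynamics} here), plus the second-order term $2n\mu\cdot\frac1n\big(\mu E([U_1^a]^2)\big)(\partial_{ss}\phi+\partial_{mm}\phi)$ giving the two independent Brownian motions $W_a,W_b$ each with variance rate $\sigma^2=\mu E([U_1^a]^2)$. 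The jump part: $2n\mu\cdot\frac\gamma n\, E\big[\phi(s+ s\,V/2,\,m\pm s\,V/2)-\phi(s,m)\big]$ with $V\sim f_V$ gives exactly the two i.i.d.\ compound-Poisson terms $\bar s\,dJ_i/2$, $\bar s\,dZ_i/2$ of intensity $\gamma\mu$ each; note $\Theta_s^n$ receives $+\Delta_a-\Delta_b$ so the jump appears in $\bar m$ as a difference $s V^a/2 - s V^b/2$, matching $Z_1,Z_2$. Uniform control of the remainder terms uses the boundedness of $U_k\in[0,\xi]$, the moment assumption $E|V_k|<\infty$ (which holds for the family $f_V$ with $u>1$), and $\phi\in C_c^2$; this is routine but must be done carefully because of the floor functions.

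The two genuinely delicate points are (i) the reflection at $s=0$ and (ii) tightness together with identification of the limit as the \emph{unique} solution of the reflected SDE. For (i), the auxiliary dynamics use the cap $(\Delta_a^n+\Delta_b^n)\vee(-\tilde S^n(t_k))$, which is precisely a Skorokhod-type reflection keeping $\tilde S^n\ge 0$; the induced ``pushing'' process $L^n(t):=\sum_{t_k\le t}\big[(\Delta_a^n+\Delta_b^n)\vee(-\tilde S^n(t_k)) - (\Delta_a^n+\Delta_b^n)\big]$ is nondecreasing and increases only when $\tilde S^n$ is at $0$, so in the limit one gets the local-time term $L$ with $dL\ge0$, $\int \bar s\,dL=0$ — i.e.\ $\bar s$ is the one-sided Skorokhod reflection of $m_0$-driftless-noise-minus-drift-plus-jumps, which by the Lipschitz continuity of the Skorokhod map is well-posed and continuous in the driving path. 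I would therefore prove tightness of $(\tilde S^n,\tilde M^n)$ first (via the generator bounds and Aldous' criterion, or by writing $\tilde S^n$ through the Skorokhod map applied to a manifestly tight ``free'' process and invoking continuity of the map), then pass to the limit in the martingale problem, and finally appeal to uniqueness of the reflected jump-diffusion \eqref{jump diffusion limit} — the jump coefficient $s\mapsto s$ and the drift are Lipschitz, the jump measure has finite intensity, so pathwise uniqueness and hence uniqueness in law hold — to conclude that the whole sequence converges. \textbf{The main obstacle} I anticipate is the interplay between the reflection and the state-dependent jumps near the boundary: a jump of size $\bar s(t_-)V/2$ with $V$ possibly as negative as $-1$ can land $\bar s$ exactly at $0$ or (in the prelimit, before the cap) below it, so I must check that the cap $\vee(-\tilde S^n)$ correctly absorbs this, that no mass is created by simultaneous small-noise-plus-jump-plus-reflection events, and that the limiting $L$ does not pick up spurious contributions from jump times — i.e.\ that $L$ is continuous and the reflection and jump terms do not ``interact'' in the limit. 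Establishing that the boundary layer is visited negligibly often, in the precise sense needed for Proposition~\ref{prop auxiliary converge}'s companion (comparing $(\tilde S^n,\tilde M^n)$ with $(\bar s^n,\bar m^n)$), is deferred to the next proposition, but the continuity of $L$ and the Lipschitz property of the Skorokhod map are what make the present argument go through.
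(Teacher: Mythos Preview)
Your proposal is sound in outline but takes a different---and more technically loaded---route than the paper. You propose a unified generator/martingale-problem argument: compute $\mathcal{L}^n\phi$, show convergence to the generator of the reflected jump-diffusion, establish tightness, and invoke uniqueness of the limiting martingale problem. The paper instead exploits the finite-intensity jump structure to \emph{decouple} the problem. Since the indicators $I_k^a,I_k^b$ fire at Poisson rate $\gamma\mu$, there are only finitely many ``large'' jumps on any compact interval; the paper therefore argues \emph{between} consecutive jump times. On each such interval the increments reduce to the bounded $(-1)^{R_k}[U_k/(\sqrt n\,\delta)]\delta$ terms, so the martingale CLT (\citet{Ethier&Kurtz_1986}, Thm.~7.1.4) gives convergence of the free process $(S_a^n,S_b^n)$ to the drifted Brownian pair $(W_a,W_b)$ with the stated $\eta,\sigma^2$; the continuous mapping theorem applied to the one-dimensional Skorokhod reflection map then yields convergence of $\tilde S^n$ (and of $\tilde M^n$, which has no reflection) on that interval. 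At the first jump time $\min(A,B)$ one invokes Skorokhod embedding to upgrade to a.s.\ convergence, appends the state-dependent jump $[\tilde S^n(t_-)V/(2\delta)]\delta\to \bar s(t_-)V/2$ by hand, and restarts with the new initial condition. Iterating across the finitely many jumps completes the construction of the limit and the proof.

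The payoff of the paper's route is that it sidesteps exactly the obstacle you flag as ``the main obstacle'': the interaction of reflection with state-dependent jumps near the boundary, and the need to formulate and establish well-posedness of a \emph{reflected} martingale problem (which would require a submartingale-problem formulation or a careful specification of the generator's domain at $s=0$). By invoking the Skorokhod map only between jumps---where the driver is a bounded-increment random walk with no state dependence---everything reduces to the standard continuity of the reflection map plus a martingale CLT, and the jump/reflection interaction is handled pathwise at isolated times where it is trivial. Your generator approach would ultimately work (and your fallback of ``write $\tilde S^n$ through the Skorokhod map applied to a manifestly tight free process'' is essentially the paper's idea), but the piecewise argument is shorter and cleaner, and it makes the uniqueness of the limit immediate rather than requiring a separate well-posedness result for the reflected SDE.
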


\begin{proposition}
\label{prop error to 0} The difference process $\left( \bar{s}^{n}(\cdot )-%
\tilde{S}^{n}(\cdot ),\bar{m}^{n}(\cdot )-\tilde{M}^{n}(\cdot )\right) $
converges weakly to $(0,0)$ on $D_{\mathbb{R}^{2}}[0,t]$ for any $t<\infty $.
\end{proposition}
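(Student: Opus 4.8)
The statement to prove is Proposition~\ref{prop error to 0}: the coupled difference process converges weakly to zero.

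\textbf{Overall approach.} The plan is to exploit the coupling: $(\bar s^n,\bar m^n)$ and $(\tilde S^n,\tilde M^n)$ are driven by the \emph{same} primitive randomness $\{I^a_k,U^a_k,R^a_k,V^a_k,I^b_k,\dots\}$ and the same market-order clock $\{t_k\}$, and they differ only through the two reflection/cap mechanisms: the target uses the per-side caps $\vee\big([-\bar s^n(t_k)/2\delta]\delta\big)$ applied separately to $\Delta^n_a$ and $\Delta^n_b$, while the auxiliary uses a single joint cap $\vee(-\tilde S^n(t_k))$ on $\Delta^n_a+\Delta^n_b$. Both devices activate only when the spread is small, i.e.\ on the boundary-layer set $B_n=[0,2\xi/\sqrt n]\times\mathbb R$ (recall $U^a_k\le\xi$, so a single increment of the $(1-I)$-part has magnitude at most of order $\xi/\sqrt n$, and the jump part shrinks the spread by a factor, never below an $O(1/\sqrt n)$ rounding term). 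Away from $B_n$ the two recursions are \emph{identical}. So the discrepancy is injected only during excursions into $B_n$, each injection is $O(1/\sqrt n)$ in size, and one must control how many such injections accumulate over $[0,t]$ before they can be absorbed back out by the reflection.

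\textbf{Key steps, in order.} First, I would record the exact algebraic identity for the one-step discrepancy $e^n_k:=(\bar s^n(t_k)-\tilde S^n(t_k),\ \bar m^n(t_k)-\tilde M^n(t_k))$: outside $B_n$ (more precisely, when both $\bar s^n(t_k)$ and $\tilde S^n(t_k)$ exceed $2\xi/\sqrt n$ so no cap binds and the increments coincide), $e^n_{k+1}=e^n_k$; when a cap binds, the new discrepancy equals the old one plus a correction bounded in absolute value by $C/\sqrt n$ for a deterministic constant $C$ (the difference of the two rounding/capping operators applied to increments of size $O(1/\sqrt n)$, plus the jump-part contraction which again only matters near the boundary and shifts things by $O(1/\sqrt n)$). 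Second, I would bound the expected number of indices $k$ with $t_k\le t$ for which a cap binds: since $\mu_n=n\mu$, there are $O(nt)$ steps in $[0,t]$; I need that the \emph{fraction} of them spent in $B_n$ is $O(1/\sqrt n)$, so that the cumulative number of ``bad'' steps is $O(n^{1/2}t)$ and the accumulated discrepancy is $O(n^{1/2}t)\cdot O(n^{-1/2})=O(t)$ --- which is \emph{not} yet small, so a cruder argument is insufficient and one genuinely needs the reflection to be removing discrepancy as fast as it is created. Third, therefore I would instead argue via a supremum bound: define $D^n(t)=\sup_{t_k\le t}\|e^n_k\|$ and show $D^n$ satisfies a recursion in which each bad step adds at most $C/\sqrt n$ but the reflection term is monotone and, whenever it is active on the target side but not the auxiliary (or vice versa), it can only \emph{decrease} $|\bar s^n-\tilde S^n|$ once both processes have climbed back above the layer. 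The clean way to make this rigorous is: (i) show $\bar s^n$ and $\tilde S^n$ are each tight in $D[0,\infty)$ with continuous-in-the-limit reflection (this follows from Proposition~\ref{prop auxiliary converge} for $\tilde S^n$, and an identical generator computation for $\bar s^n$, whose only difference is the boundary mechanism); (ii) use the tightness to get, for the limit, that the occupation measure of $\{0\}$ has Lebesgue measure zero, hence for the prelimits $\frac1n\#\{k\le nt:(\bar s^n,\tilde S^n)(t_k)\in B_n\}\to 0$ in probability; (iii) combine with a Gr\"onwall-type estimate on $E\,D^n(t)$ in which the ``source'' term is (number of bad steps)$\times C/\sqrt n$ and the bad-step count is now $o(n)$, giving $E\,D^n(t)\to 0$. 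Finally, $D^n(t)\to 0$ in probability and $\sup$ over $[0,t]$ of both coordinates is controlled by $D^n$ plus the martingale oscillation of $\bar m^n-\tilde M^n$, which is a martingale with increments $O(1/\sqrt n)$ supported only on bad steps, hence also $o(1)$ by Doob; this yields the claimed $J_1$-convergence to $(0,0)$.

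\textbf{Main obstacle.} The crux is step (ii)--(iii): quantifying that the coupled pair spends asymptotically negligible (relative) time with the spread inside the $O(1/\sqrt n)$ boundary layer, and doing so \emph{uniformly enough} that the accumulated $O(1/\sqrt n)$ corrections sum to $o(1)$ rather than $O(1)$. A reflected Brownian motion with drift does spend zero Lebesgue time at the origin, so the limiting statement is true, but transferring it to a rate on the prelimit occupation of a shrinking neighborhood --- and ruling out pathological clustering of short excursions that each trigger a correction --- is where the real work lies. I expect this to be handled by a comparison/excursion argument: bound the number of up-crossings of the layer by $\bar s^n$ up to time $t$ using the fact that between consecutive visits the spread must travel an $O(1/\sqrt n)$ distance driven by an $O(n)$-rate random walk with $O(1/\sqrt n)$ steps, so visits are not too frequent, and each visit contributes a geometrically small expected number of in-layer steps. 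One then feeds this count into the Gr\"onwall estimate. The remaining pieces --- the exact one-step discrepancy bound, tightness of $\bar s^n$, and the martingale oscillation control --- are routine given the machinery already developed for Proposition~\ref{prop auxiliary converge}.
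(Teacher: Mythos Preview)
Your proposal correctly identifies the coupling structure, and your treatment of the mid-price discrepancy via a martingale argument plus Doob's inequality is right and matches the paper's handling of what it calls $\epsilon_0^n$. However, there is a genuine gap in your treatment of the spread discrepancy $\bar s^n - \tilde S^n$.

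Your step (iii) proposes a Gr\"onwall-type bound with source term (number of bad steps)$\times C/\sqrt n$, where the bad-step count is $o(n)$ from the occupation-time argument in (ii). But $o(n)\cdot n^{-1/2}=o(\sqrt n)$, which does not tend to zero. Even the sharpest reasonable occupation estimate --- the limiting reflected diffusion has bounded density near zero, so the fraction of time in $[0,2/\sqrt n]$ is $O(n^{-1/2})$, giving $O(\sqrt n)$ bad steps --- yields a source of size $O(1)$. An accumulation/Gr\"onwall argument cannot close here. There is a second manifestation of the same problem at jump times: because the jump increment is $\bar s^n(t_{k-1})V_k/2$, the two coupled jump increments differ by $(\bar s^n-\tilde S^n)(t_{k-1})\,V_k/2$, which is proportional to the current discrepancy, not $O(1/\sqrt n)$ unless you already know the discrepancy is that small.

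The missing idea --- which you actually touch on with your monotonicity remark but then abandon --- is that the spread discrepancy \emph{does not accumulate at all between jumps}. The paper proves by a case-by-case induction on $k$ that when $I^a_k=I^b_k=0$,
\[
0 \;\le\; \bar s^n(t_k) - \tilde S^n(t_k) \;\le\; \big(\bar s^n(t_{k-1}) - \tilde S^n(t_{k-1})\big) \vee \tfrac{2}{\sqrt n}.
\]
The three cases are: both processes above the layer (discrepancy unchanged, since neither cap binds); $\bar s^n$ above but $\tilde S^n$ inside (discrepancy \emph{decreases}, since only the auxiliary reflection acts); both inside the layer (discrepancy bounded by the layer width $2/\sqrt n$). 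So between jumps the discrepancy is capped at $2/\sqrt n$ irrespective of how many boundary visits occur --- no occupation-time input is needed. At a jump the discrepancy is multiplied by a factor bounded by $(1+c)$ where $c$ is a truncation level on $V$; since the jump count $N(t)$ on $[0,t]$ converges to a Poisson variable, one obtains the pathwise bound
\[
0 \;\le\; \bar s^n(t_k) - \tilde S^n(t_k) \;\le\; \big((1+c)^{N(t_k)} - 1\big)\cdot \tfrac{2}{\sqrt n} \;=\; O_P(n^{-1/2})
\]
uniformly on $[0,t]$. Your monotonicity observation was the right thread to pull; the counting route cannot work for the spread component. The occupation-time argument is used only afterwards, and only to control the quadratic variation of the martingale $\epsilon_0^n$ in the mid-price difference --- there the scaling is $[\epsilon_0^n](t)\le (4/n)\cdot\#\{\text{bad steps}\}=o(1)$, which is exactly the place where $o(n)$ bad steps suffice.
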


\bigskip

\begin{proof}[Proof of Proposition \protect\ref{prop error to 0}]
For simplicity, we assume that $\xi=1$, otherwise we can divide $\tilde{S}$,
$\tilde{M}$ and $\bar{s}$, $\bar{m}$ by the constant $\xi$. Assume also that
$V_{k}^{a}\leq c$ for some $c\geq 1$. The general case can be dealt with
using truncation because there are only a Poisson number of jumps that arise
in $O\left( 1\right) $ time.

Now, let us first give a bound for the difference $\bar{s}^{n}(\cdot )-%
\tilde{S}^{n}(\cdot )$. For fixed $n$, we define $N(t)=\sum_{\{j:t_{j}\leq
t\}}(I_{j}^{a}+I_{j}^{b})$, intuitively $N(t)$ corresponds to the number of
jumps of the limiting process from time 0 to time $t$. Now we prove by
induction that
\begin{equation}
0\leq \bar{s}^{n}(t_{k})-\tilde{S}^{n}(t_{k})\leq \left(
(1+c)^{N(t_{k})}-1\right) \cdot \frac{2}{\sqrt{n}}.  \label{error S}
\end{equation}%
At $t_{0}=0$, we have $\tilde{S}^{n}(t_{0})=\bar{s}(t_{0})$. Now suppose the
relation (\ref{error S}) holds at time $t_{k-1}$, there are two cases at
time $t_{k}$, case 1): $N(t_{k})=N(t_{k-1})$, and case 2): $%
N(t_{k})>N(t_{k-1})$.

First let us consider the case when $N(t_{k})=N(t_{k-1})$. In this case, we
know that $\Delta _{a}^{n}(\bar{s}^{n}(t_{k-1}))=\Delta _{a}^{n}(\tilde{S}%
(t_{k-1})):=\Delta _{a}^{n}$ is independent of $\bar{s}^{n}(t_{k-1})$ and $%
\tilde{S}^{n}(t_{k-1})$. Also, keep in mind that $\left\vert \Delta
_{a}^{n}\left( t_{k}\right) \right\vert \leq 1/\sqrt{n}$. \ Now we can write
the increment of the difference process
\begin{align}
& (\bar{s}^{n}(t_{k})-\tilde{S}^{n}(t_{k}))-(\bar{s}^{n}(t_{k-1})-\tilde{S}%
^{n}(t_{k-1}))  \notag \\
=~& \Delta _{a}^{n}\vee ([-\bar{s}^{n}(t_{k})/(2\delta )]\delta )+\Delta
_{b}^{n}\vee ([-\bar{s}^{n}(t_{k})/(2\delta )]\delta )-(\Delta
_{a}^{n}+\Delta _{b}^{n})\vee (-\tilde{S}^{n}(t_{k-1})).
\label{error increment}
\end{align}%
Therefore, if $\bar{s}^{n}(t_{k-1})\geq \tilde{S}^{n}(t_{k-1})\geq 2/\sqrt{n}
$, we have
\begin{equation*}
(\bar{s}^{n}(t_{k})-\tilde{S}^{n}(t_{k}))-(\bar{s}^{n}(t_{k-1})-\tilde{S}%
^{n}(t_{k-1}))=\Delta _{a}^{n}+\Delta _{b}^{n}-(\Delta _{a}^{n}+\Delta
_{b}^{n})=0
\end{equation*}%
and as a result $\bar{s}^{n}(t_{k-1})-\tilde{S}^{n}(t_{k-1})=\bar{s}%
^{n}(t_{k})-\tilde{S}^{n}(t_{k})$. If $\bar{s}^{n}(t_{k-1})\geq 2/\sqrt{n}%
\geq \tilde{S}^{n}(t_{k-1})\geq 0$, We have
\begin{align*}
& (\bar{s}^{n}(t_{k})-\tilde{S}^{n}(t_{k}))-(\bar{s}^{n}(t_{k-1})-\tilde{S}%
^{n}(t_{k-1})) \\
=~& \Delta _{a}^{n}+\Delta _{b}^{n}-(\Delta _{a}^{n}+\Delta _{b}^{n})\vee (-%
\tilde{S}^{n}(t_{k-1}))=-(\tilde{S}^{n}(t_{k-1})+\Delta _{a}^{n}+\Delta
_{b}^{n})^{-}\leq 0.
\end{align*}%
Therefore,
\begin{equation*}
\bar{s}^{n}(t_{k-1})-\tilde{S}^{n}(t_{k-1})\geq \bar{s}^{n}(t_{k})-\tilde{S}%
^{n}(t_{k})\geq \frac{2}{\sqrt{n}}-(\tilde{S}^{n}(t_{k-1})+\Delta
_{a}^{n}+\Delta _{b}^{n})\geq 0.
\end{equation*}%
Otherwise, we have $0\leq \tilde{S}^{n}(t_{k-1})\leq \bar{s}%
^{n}(t_{k-1})\leq 2/\sqrt{n}$. In this case, one can check that for any
fixed $\tilde{S}^{n}(t_{k})=\tilde{s}$ and $\bar{s}^{n}(t_{k})=s$, the
increment of the difference process (\ref{error increment}) reaches its
maximum at $\Delta _{a}^{n}=-1//\sqrt{n}$ and $\Delta _{b}^{n}=-\tilde{s}+/%
\sqrt{n}$ and its minimum at $\Delta _{a}^{n}=\Delta _{b}^{n}=-[s/2\delta
]\delta $. Hence,
\begin{equation*}
\tilde{s}-s\leq \left( \bar{s}^{n}(t_{k})-\tilde{S}^{n}(t_{k})\right)
-\left( \bar{s}^{n}(t_{k-1})-\tilde{S}^{n}(t_{k-1})\right) \leq 0\vee (\frac{%
1}{\sqrt{n}}-\frac{s}{2}-\tilde{s}).
\end{equation*}%
Plugging in $\tilde{S}^{n}(t_{k})=\tilde{s}$ and $\bar{s}^{n}(t_{k})=s$, we
have
\begin{equation*}
0\leq \bar{s}^{n}(t_{k})-\tilde{S}^{n}(t_{k})\leq (s-\tilde{s})\vee (\frac{1%
}{\sqrt{n}}+\frac{s}{2})\leq (\bar{s}^{n}(t_{k-1})-\tilde{S}%
^{n}(t_{k-1}))\vee \frac{2}{\sqrt{n}}.
\end{equation*}%
The last inequality holds as $s=\bar{s}^{n}(t_{k-1})\leq 2/\sqrt{n}$. In
summary, we have proved that when $N(t_{k})=N(t_{k-1})$, if the relation (%
\ref{error S}) holds at time $t_{k-1}$, so does it at time $t_{k}$.

Now if $N(t_{k})\geq N(t_{k-1})+1$, intuitively, at least one jump occurs in
$\Delta _{a}^{n}$ and $\Delta _{b}^{n}$. If $I_{k}^{a}=1$ we have
\begin{equation*}
\Delta _{a}^{n}(\tilde{S}^{n}(t_{k-1}))=I_{k}^{a}V_{k}^{a}[\tilde{S}%
^{n}(t_{k-1})/(2\delta )]\delta ,\text{ and }\Delta _{a}^{n}(\bar{s}%
^{n}(t_{k-1}))=I_{k}^{a}V_{k}^{a}[\bar{s}^{n}(t_{k-1})/(2\delta )]\delta .
\end{equation*}%
If in addition $I_{k}^{b}=1$, then
\begin{equation*}
\Delta _{b}^{n}(\tilde{S}^{n}(t_{k-1}))=I_{k}^{b}V_{k}^{b}[\tilde{S}%
^{n}(t_{k-1})/(2\delta )]\delta ,\text{ and }\Delta _{b}^{n}(\bar{s}%
^{n}(t_{k-1}))=I_{k}^{b}V_{k}^{b}[\bar{s}^{n}(t_{k-1})/(2\delta )]\delta ,
\end{equation*}%
and therefore,
\begin{equation*}
\bar{s}^{n}(t_{k})-\tilde{S}^{n}(t_{k})\leq (\bar{s}^{n}(t_{k-1})-\tilde{S}%
^{n}(t_{k-1}))(I_{k}^{a}V_{k}^{a}+I_{k}^{b}V_{k}^{b}+1)/2.
\end{equation*}%
As
\begin{equation*}
0\leq V_{k}^{a}+V_{k}^{b}+1\leq 2c+1\text{ \ and \ }0\leq \bar{s}%
^{n}(t_{k-1})-\tilde{S}^{n}(t_{k-1})\leq ((c+1)^{N(t_{k-1})}-1)\cdot \frac{2%
}{\sqrt{n}},
\end{equation*}
by the induction assumption we have
\begin{equation*}
0\leq \bar{s}^{n}(t_{k})-\tilde{S}^{n}(t_{k})\leq
((c+1)^{N(t_{k-1})}-1)(2c+1)\cdot \frac{2}{\sqrt{n}}\leq
((c+1)^{N(t_{k})}-1)\cdot \frac{2}{\sqrt{n}}.
\end{equation*}%
If $I_{k}^{b}=0$, then following a similar argument as in the case when $%
N(t_{k})=N(t_{k-1})$, we have
\begin{align*}
\bar{s}^{n}(t_{k})-\tilde{S}^{n}(t_{k})& \leq (\bar{s}^{n}(t_{k-1})-\tilde{S}%
^{n}(t_{k-1}))(V_{k}^{a}+1)+(\bar{s}^{n}(t_{k-1})-\tilde{S}%
^{n}(t_{k-1}))\vee \frac{2}{\sqrt{n}} \\
& =((c+1)^{N(t_{k})}-1)\cdot \frac{2}{\sqrt{n}}\text{ ~~ as }c\geq 1.
\end{align*}%
In summary, we have proved the relation (\ref{error S}) of $\tilde{S}%
^{n}(\cdot )$ and $\bar{s}^{n}(\cdot )$ by induction.

Now let us turn to the difference $\bar{m}^{n}(\cdot )-\tilde{M}^{n}(\cdot )$%
. Actually, $\bar{m}^{n}(t)-\tilde{M}^{n}(t)$ can be decomposed into two
parts,
\begin{align*}
& \bar{m}^{n}(t)-\tilde{M}^{n}(t) \\
\leq ~& \sum_{0\leq k\leq \lbrack nt]:N(t_{k+1})=N(t_{k})}\left[ \Delta
_{a}^{n}(t_{k})\vee ([\bar{s}^{n}(t_{k-1})/(2\delta )]\delta )-\Delta
_{b}^{n}(t_{k})\vee ([\bar{s}^{n}(t_{k-1})/(2\delta )]\delta )-(\Delta
_{a}^{n}(t_{k})-\Delta _{b}^{n}(t_{k}))\right] \\
& ~+\sum_{i=1}^{N(t)}([\bar{s}^{n}(t_{k-1})/(2\delta )]\delta -[\tilde{S}%
^{n}(t_{k-1})/(2\delta )]\delta )(I_{k_i}^aV_{k_i}^a+I_{k_i}^bV_{k_i}^b),
\end{align*}%
\newline
where $\{t_{k_i}\}$ are the jump times. We denote the two summation parts as
\begin{equation*}
\bar{m}^{n}(t)-\tilde{M}^{n}(t)= \epsilon_0^{n}(t)+\epsilon_1^n(t).
\end{equation*}
Intuitively, $\epsilon_0^{n}(t)$ is the error corresponding to the diffusion
part when $I^a_k=I^b_k=0$ and $\epsilon_1^n(t)$ is the error corresponding
to the jumps. In the summation part $\epsilon_0^n(t)$, we write $\Delta
_a^{n}(\bar{s}^{n}(t_{k}))=\Delta _a^{n}(\tilde{S}^{n}(t_{k}))=\Delta
_a^{n}(t_{k})$ , because they are independent of $\bar{s}^{n}(t_{k})$ and $%
\tilde{S}^{n}(t_{k})$ when when $I^a_k=I^b_k=0 $.

Following a same induction argument as for $\bar{s}^{n}-\tilde{S}^{n}$, we
can show that the error caused by jumps $\epsilon_1^n(t)$ satisfies that
\begin{equation*}
\epsilon_1^n(t)\leq ((1+2c)^{N(t)}-1)\cdot \frac{2}{\sqrt{n}}.
\end{equation*}%
On the other hand, note that $\epsilon_0^n(t)$ equals
\begin{align*}
& \sum_{0\leq k\leq \lbrack nt]:N(t_{k+1})=N(t_{k})}\left[
\Delta_{a}^{n}(t_{k})\vee ([\bar{s}^{n}(t_{k-1})/(2\delta )]\delta
)-\Delta_{b}^{n}(t_{k})\vee ([\bar{s}^{n}(t_{k-1})/(2\delta )]\delta
)-(\Delta_{a}^{n}(t_{k})-\Delta_{b}^{n}(t_{k}))\right] \\
=& \sum_{0\leq k\leq \lbrack nt]:N(t_{k+1})=N(t_{k})}\left[ \left(
\Delta_{a}^{n}(t_{k})\vee ([\bar{s}^{n}(t_{k-1})/(2\delta )]\delta
)-\Delta_{a}^{n}(t_{k})\right) -\left( \Delta_{b}^{n}(t_{k})\vee ([\bar{s}%
^{n}(t_{k-1})/(2\delta )]\delta )-\Delta_{b}^{n}(t_{k})\right) \right] .
\end{align*}%
Since $\Delta_{a}^{n}(t_{k})$ and $\Delta_{b}^{n}(t_{k})$ are independent
and identically distributed, we have that for any $k\geq 1$
\begin{align*}
E[\epsilon_0^{n}(t_{k})-\epsilon_0^{n}(t_{k-1})|\mathcal{F}_{t_{k}}^{n}]=~&
P(N(t_{k})=N(t_{k-1}))\cdot \left( E\left[ \Delta_{a}^{n}(t_{k})\vee ([\bar{s%
}^{n}(t_{k-1})/(2\delta )]\delta )-\Delta_{a}^{n}(t_{k})\right] \right. \\
& \left. -E\left[ \Delta_{b}^{n}(t_{k})\vee ([\bar{s}^{n}(t_{k-1})/(2\delta
)]\delta )-\Delta_{b}^{n}(t_{k})\right] \right) =0,
\end{align*}%
where $\mathcal{F}_{t_{k}}^{n}$ is the $\sigma $-field generated by $%
\{\Delta_{a}^n(t_{i}),\Delta_{b}^n(t_{i}),\tilde{S}^{n}(t_{i})\}_{i=1}^{k}$.
Therefore, the process $\epsilon_0^{n}(\cdot )$ is a martingale under the
filtration \textbf{$\mathcal{F}^{n}$}. Besides, as $|\Delta_a^n(t_{k})|\leq
1/\sqrt{n}$ when $N(t_{k})=N(t_{k-1})$, we have
\begin{equation*}
|\epsilon_0^{n}(t_{k})-\epsilon_0^{n}(t_{k-1})|\leq \frac{2}{\sqrt{n}}~I(%
\bar{s}^{n}(t_{k-1})<\frac{2}{\sqrt{n}}).
\end{equation*}%
The quadratic variation
\begin{equation*}
[\epsilon_0^{n}](t)~\leq \frac{4}{n}\sum_{i=0}^{[nt]}I(\bar{s}^{n}(t_{i})<%
\frac{2}{\sqrt{n}}).
\end{equation*}%
Recall that we have proved $\bar{s}^{n}(\cdot )\geq \tilde{S}^{n}(\cdot )$,
\begin{equation*}
[\epsilon_0^{n}](t)~\leq \frac{4}{n}\sum_{i=0}^{[nt]}I(\tilde{S}^{n}(t_{i})<%
\frac{2}{\sqrt{n}}).
\end{equation*}%
Since $2/\sqrt{n}\rightarrow 0$, for any $\zeta >0$ we have
\begin{equation*}
\lim_{n\rightarrow \infty }[\epsilon_0^{n}](t)~\leq \lim_{n\rightarrow
\infty }4\int_{0}^{t}I(\tilde{S}^{n}(u)<\zeta )du\leq \lim_{n\rightarrow
\infty }4\int_{0}^{t}f^{\zeta }(\tilde{S}^{n}(u))du,
\end{equation*}%
where $f^{\zeta}(\cdot )$ is a smooth function on $\mathbb{R}^{+}$ and
satisfies $f(x)=1$ for all $0\leq x\leq \zeta $, $0\leq f(x)\leq 1$ for $%
\zeta \leq x\leq 2\zeta $ and $f(x)=0$ for $x> 2\zeta $. (Such function can
be constructed, for instance, by convolution.) Since $f^{\zeta }(\cdot )$ is
bounded and $\tilde{S}^{n}(\cdot )$ converges weakly to the limit process (%
\ref{jump diffusion limit}), we have
\begin{equation*}
\lim_{n\rightarrow \infty }E[[\epsilon_0^{n}](t)]~\leq \lim_{n\rightarrow
\infty }4E[\int_{0}^{t}f^{\zeta }(\tilde{S}^{n}(u))du]=4E[\int_0^t f^{\zeta}(%
\bar{s}(u))du]\leq 4E[\int_{0}^{t}I(\bar{s}(u)\leq 2\zeta )],
\end{equation*}%
As the limit process $\bar{s}(\cdot )$ has the same dynamics as a reflected
Brownian motion except when at the finite time of jumps on $[0,t]$, we have $%
E[\int_{0}^{t}I(\bar{s}(u)\leq 2\zeta )]\rightarrow 0$ as $\zeta \rightarrow
0$. Since $\zeta $ can be arbitrarily small, we conclude that the expected
quadratic variation $E[[\epsilon_0^{n}](t)]\rightarrow 0$ as $n\rightarrow 0
$ for any $t<\infty $. By Doob's Inequality, we have that for all fixed $%
\zeta>0$,
\begin{equation*}
P(\max_{0\leq u\leq t}|\epsilon_0^{n}(u)|>\zeta )\leq \frac{%
E[[\epsilon_0^{n}](t)]}{\zeta ^{2}}\rightarrow 0.
\end{equation*}%
Therefore, $\epsilon_0^{n}(\cdot )$ converges weakly to $x(\cdot )\equiv 0$
in space $D[0,t]$ for all $t<\infty $.

In the end, it is given in Assumption \ref{ass:scaling} that $%
nP(I_k^aV_k^a+I_k^bV_k^b\neq 0)=2\gamma+o(1)$, so the counting process $%
N(\cdot )$ converges to a Poisson process with rate $2\mu\gamma$. Therefore,
for any $t<\infty $
\begin{equation*}
E[\max_{0\leq u\leq t}|((2c+1)^{N(u)}-1)\frac{2}{\sqrt{n}}%
|]=E[((2c+1)^{N(t)}-1)\frac{2}{\sqrt{n}}]=O(\frac{1}{\sqrt{n}}).
\end{equation*}%
As a result, the process $((2c+1)^{N(\cdot )}-1)\frac{2}{\sqrt{n}}$
converges weakly to $x(\cdot )\equiv 0$ in space $D[0,t]$. Recall that we
have proved that $((2c+1)^{N(\cdot )}-1)\frac{2}{\sqrt{n}}$ is an upper
bound of $|\bar{s}^{n}(\cdot )-\tilde{S}^{n}(\cdot )|$ and the `jump part'
of $|\bar{m}^{n}(\cdot )-\tilde{M}^{n}(\cdot )|$. As a consequence, we can
conclude that the difference process $(\bar{s}^{n}(\cdot )-\tilde{S}%
^{n}(\cdot ),\bar{m}^{n}(\cdot )-\tilde{M}^{n}(\cdot ))$ converges weakly to
$(0,0) $ on any compact interval $[0,t]$.
\end{proof}

\bigskip


\begin{proof}[Proof of Proposition \protect\ref{prop auxiliary converge}]
Define $N_{a}(t)=\sum_{\{j:t_{j}\leq t\}}I_{j}^{a}$, $N_{b}(t)=\sum_{%
\{j:t_{j}\leq t\}}I_{j}^{b}$, and note that $N_{a}\left( \cdot \right) $, $%
N_{b}\left( \cdot \right) $ are two independent Poisson processes with rate $%
\gamma \mu $ each. Next, define $\tilde{S}^{n}(0)=\bar{s}^{n}(0)\geq 0$, and
$\tilde{M}^{n}(0)=\bar{m}^{n}(0),$ and set%
\begin{equation*}
S_{a}^{n}\left( t\right) =\sum_{\{j:t_{j}\leq
t\}}(-1)^{R_{j}^{a}}[U_{j}^{a}/(\sqrt{n}\delta _{n})]\delta _{n},\text{ \ }%
S_{b}^{n}\left( t\right) =\sum_{\{j:t_{j}\leq
t\}}(-1)^{R_{k}^{b}}[U_{k}^{b}/(\sqrt{n}\delta _{n})]\delta _{n}).
\end{equation*}%
We will also define
\begin{eqnarray*}
S^{n}\left( t\right) &=&S_{a}^{n}\left( t\right) +S_{b}^{n}\left( t\right) +%
\bar{s}^{n}(0), \\
M^{n}\left( t\right) &=&S_{a}^{n}\left( t\right) -S_{b}^{n}\left( t\right) +%
\bar{m}^{n}(0),
\end{eqnarray*}%
(so by convention we set $t_{0}=0$ and $S^{n}\left( 0\right) =\bar{s}^{n}(0)$%
). Also, we define
\begin{equation*}
R_{1}^{n}(t)=S^{n}\left( t\right) -\min (S^{n}\left( u\right) :u\leq t,0).
\end{equation*}

Let $A=\inf \{t\geq 0:N_{a}\left( t\right) \geq 1\}$ and $B=\inf \{t\geq
0:N_{b}\left( t\right) \geq 1\}$ be the first arrival times of $N_{a}\left(
\cdot \right) $ and $N_{b}\left( \cdot \right) $, respectively. Since
\begin{equation*}
\tilde{S}^{n}(t_{k})=(\Delta _{a}^{n}(\tilde{S}^{n}(t_{k}))+\Delta _{b}^{n}(%
\tilde{S}^{n}(t_{k}))+\tilde{S}^{n}(t_{k}))^{+},
\end{equation*}%
we have that on $\min (A,B)>t$
\begin{equation*}
\tilde{S}^{n}(t)=R^{n}\left( t\right) .
\end{equation*}%
The strategy proceeds as follows.

Step 1): Show that if $\left( \bar{s}^{n}(0),\bar{m}^{n}(0)\right)
\Rightarrow \left( \bar{s}(0),\bar{m}(0)\right) $, the processes $%
(S_{a}^{n}\left( t\right) ,S_{b}^{n}\left( t\right) :t\geq 0)$ converges
weakly in $D[0,\infty )$ to the process $\left( X\left( t\right) :t\geq
0\right) $ defined via%
\begin{eqnarray*}
X_{1}\left( t\right) &=&\bar{s}(0)-\eta t+W_{a}\left( t\right) +W_{b}\left(
t\right) , \\
X_{2}\left( t\right) &=&\bar{m}(0)+W_{a}\left( t\right) -W_{b}\left(
t\right) .
\end{eqnarray*}

Step 2): Once Step 1) has been executed we can directly apply the continuous
mapping principle to conclude joint weak convergence on $[0,\min (A,B))$ of
the processes%
\begin{eqnarray*}
R^{n}(\cdot ) &\Rightarrow &R\left( \cdot \right) :=X_{1}\left( \cdot
\right) -\min \left( X_{1}\left( u\right) :0\leq u\leq \cdot \right) , \\
M^{n}(\cdot ) &\Rightarrow &X_{2}\left( \cdot \right) .
\end{eqnarray*}

Step 3): By invoking the Skorokhod embedding theorem, we can assume that the
joint weak convergence in Step 2) occurs almost surely. We can add the jump
right at time $\min \left( A,B\right) $ without changing the distribution of
$X_{1}\left( t\right) $ and $\tilde{S}^{n}(t)$ for $t<\min \left( A,B\right)
$. More precisely, define%
\begin{equation*}
D=I\left( A<B\right)R\left( A\right) V^{a}/2 +I\left( B\leq A\right) R\left(
B\right) V^{b}/2 ,
\end{equation*}%
where $V^{b}$ and $V^{a}$ are i.i.d. copies of $V_{k}^{b}$ and $V_{k}^{a}$
respectively, and we also define%
\begin{equation*}
D^{n}=I\left( A<B\right) [R^{n}\left( A\right) V^{a}/(2\delta )]\delta
+I\left( B\leq A\right) [R^{n}\left( B\right) V^{b}/(2\delta )]\delta .
\end{equation*}%
Then put on $t\in \lbrack 0,\min \left( A,B\right) ]$
\begin{eqnarray*}
\tilde{S}^{n}(t) &=&R^{n}\left( t\right) I\left( t<\min \left( A,B\right)
\right) +I\left( t=\min \left( A,B\right) \right) (R^{n}\left( \min \left(
A,B\right) \right) +D^{n}), \\
\bar{s}(t) &=&R\left( t\right) I\left( t<\min \left( A,B\right) \right)
+I\left( t=\min \left( A,B\right) \right) (R\left( \min \left( A,B\right)
\right) +D), \\
\tilde{M}^{n}(t) &=&M^{n}(t)I\left( t<\min \left( A,B\right) \right) \\
&&+I\left( t=\min \left( A,B\right) \right) I\left( A<B\right) [R^{n}\left(
A\right) V^{a}/(2\delta )]\delta \\
&&-I\left( t=\min \left( A,B\right) \right) I\left( B\leq A\right)
[R^{n}\left( B\right) V^{b}/(2\delta )]\delta , \\
\bar{m}(t) &=&X_{2}(t)I\left( t<\min \left( A,B\right) \right) \\
&&+I\left( t=\min \left( A,B\right) \right) I\left( A<B\right) R\left(
A\right) V^{a}/2 \\
&&-I\left( t=\min \left( A,B\right) \right) I\left( B\leq A\right) R\left(
B\right) V^{b}/2\delta .
\end{eqnarray*}%
So, assuming Step 2)\ and using Skorokhod embedding we then conclude that%
\begin{equation*}
\sup_{0\leq t\leq \min \left( A,B\right) }\left\vert \tilde{S}^{n}(t)-\bar{s}%
(t)\right\vert +\sup_{0\leq t\leq \min \left( A,B\right) }\left\vert \tilde{M%
}^{n}(t)-\bar{m}(t)\right\vert \rightarrow 0
\end{equation*}%
almost surely.

Step 4): Finally, note that the convergence extends throughout the interval $%
[0,t]$ by repeatedly applying Steps 1) to 3) given that there are only
finitely many jumps in $[0,t]$. Clearly then this procedure completes the
construction to the solution of the SDE (\ref{jump diffusion limit}).

So, we see that everything rests on the execution of Step 1), and for this
we invoke the martingale central limit theorem (see \citet{Ethier&Kurtz_1986}%
, Theorem 7.1.4). Define
\begin{equation*}
Z_{k}^{a}\left( n\right) =(-1)^{R_{k}^{a}}[U_{k}^{a}/(\sqrt{n}\delta
_{n})]\delta _{n},\text{ \ \ \ }Z_{k}^{b}\left( n\right)
=(-1)^{R_{k}^{b}}[U_{k}^{b}/(\sqrt{n}\delta _{n})]\delta _{n}.
\end{equation*}%
We have that
\begin{eqnarray*}
EZ_{k}^{a}\left( n\right) &=&E([U_{k}^{a}/(\sqrt{n}\delta _{n})]\delta
_{n})(-\beta /(\sqrt{n}))=\beta EZ_{k}^{b}\left( n\right) =
\begin{cases}
\frac{-\beta E\left( U_{1}^{a}\right) }{n}+o(1/n) & \text{ if }\delta_n=o(1/%
\sqrt{n}), \\
\frac{-\beta E\left(\left[U_{1}^{a}\right]\right)}{n}+o\left(1/n\right) &
\text{ if }\delta_n=1/\sqrt{n},%
\end{cases}%
\end{eqnarray*}
and
\begin{eqnarray*}
Var\left( Z_{k}^{a}\left( n\right) \right) &=&E([U_{k}^{a}/(\sqrt{n}\delta
_{n})]^{2}\delta _{n}^{2})-\left( EZ_{k}^{a}\left( n\right) \right)
^{2}=E([U_{k}^{a}/(\sqrt{n}\delta _{n})]^{2}\delta _{n}^{2})-O\left(
1/n\right) .
\end{eqnarray*}%
We write
\begin{equation*}
S_{a}^{n}\left( t\right) =M_{a}^{n}\left( t\right) +\mu tEZ_{k}^{b}\left(
n\right)=M_a^n(t)-\eta t+o(1) ,
\end{equation*}%
where $M_{a}^{n}\left( t\right) $ is a martingale, and we have that%
\begin{equation*}
\sup_{0\leq t\leq T}|M_{n}^{a}\left( t\right) -M_{n}^{a}\left( t_{-}\right)
|\leq (\theta /\sqrt{n}+\delta _{n}),
\end{equation*}%
therefore
\begin{equation*}
E\sup_{0\leq t\leq T}|M_{n}^{a}\left( t\right) -M_{n}^{a}\left( t_{-}\right)
|\text{ }+E\sup_{0\leq t\leq T}|M_{n}^{a}\left( t\right) -M_{n}^{a}\left(
t_{-}\right) |^{2}=o\left( 1\right)
\end{equation*}%
as $n\rightarrow \infty $, which verifies conditions a)\ and b.1) from %
\citet{Ethier&Kurtz_1986},Theorem 7.1.4. Moreover, we have that%
\begin{equation*}
\left[ M_{a}^{n},M_{a}^{n}\right] \left( t\right) =\sum_{j=1}^{N\left(
nt\right) }[U_{j}^{a}/(\sqrt{n}\delta _{n})]^{2}\delta _{n}^{2}\rightarrow
t\sigma =\left\{
\begin{array}{cl}
t\mu E(\left[ U_{j}^{a}\right] ^{2}) & \text{if }\delta _{n}=1/\sqrt{n}, \\
t\mu E(\left( U_{j}^{a}\right) ^{2}) & \text{if }\delta _{n}=o(1/\sqrt{n}).%
\end{array}%
\right.
\end{equation*}%
Furthermore, we have that
\begin{equation*}
E\max_{t\leq T}\left\vert \left[ M_{a}^{n},M_{a}^{n}\right] \left( t\right) -%
\left[ M_{a}^{n},M_{a}^{n}\right] \left( t_{-}\right) \right\vert \leq
(\theta /\sqrt{n}+\delta _{n})^{2}=o\left( 1\right) ,
\end{equation*}%
which corresponds to condition b.2)\ in \citet{Ethier&Kurtz_1986},Theorem
7.1.4. Hence, we conclude that
\begin{equation*}
M_{a}^{n}\left( \cdot \right) \Rightarrow W_{a}\left( \cdot \right)
\end{equation*}%
under the uniform topology on compact sets. A completely analogous strategy
is applicable to conclude $M_{a}^{n}\left( \cdot \right) \Rightarrow
W_{b}\left( \cdot \right) $. The convergence holds jointly due to
independence and therefore we obtain the conclusion required in Step 1). As
indicated earlier, Steps 2) to 4)\ now follow directly.
\end{proof}

\bibliographystyle{plainnat}
\bibliography{reference}

\end{document}